\newcommand{\Z}{\mathbb{Z}}
\DeclareMathOperator{\enc}{enc}
\DeclareMathOperator{\td}{td}
\newcommand{\df}{\mathrel{\mathop:}=}
\newcommand{\fd}{=\mathrel{\mathop:}}
\newcommand{\A}{\mathcal{A}}
\newcommand{\svdots}{\raisebox{3pt}{$\scalebox{.6}{$\vdots$}$}}
\newcommand{\sdots}{\raisebox{3pt}{$\scalebox{.6}{$\dots$}$}}
\newcommand{\sddots}{\raisebox{3pt}{$\scalebox{.6}{$\ddots$}$}}
\begin{document}

\fontsize{11}{13}\selectfont

\title{Tight Lower Bounds for Block-Structured Integer Programs\thanks{C. Hunkenschröder acknowledges funding by Einstein Foundation Berlin. K.-M. Klein  was supported by DFG project KL 3408/1-1. A. Lassota  was partially supported by the Swiss National Science Foundation (SNSF) within the project \emph{Complexity of Integer Programming (207365)}. M. Kouteck{\'{y}} was partially supported by the Charles University project UNCE 24/SCI/008 and by the project 22-22997S of GA ČR. A. Levin is partially supported by ISF -- Israel Science Foundation grant number 1467/22.\\ An extended abstract version of this work appreared in the Proceedings of IPCO 2024.}}

\authorrunning{Hunkenschröder et al.}

\author{Christoph Hunkenschröder\inst{1}\orcidID{0000-0001-5580-3677}\and
Kim-Manuel Klein \inst{2}\orcidID{0000-0002-0188-9492}\and
Martin Kouteck{\'{y}} \inst{3}\orcidID{0000-0002-7846-0053}\and
Alexandra Lassota \inst{4}\orcidID{0000-0001-6215-066X}\and
Asaf Levin\inst{5}\orcidID{0000-0001-7935-6218}}

\institute{(formerly) Institut für Mathematik, TU Berlin, Berlin, Germany\\
\email{hunkenschroeder@tu-berlin.de}\\ \and
Institute for Theoretical Computer Science, University of Lübeck, Lübeck, Germany\\
\email{kimmanuel.klein@uni-luebeck.de}\\ \and
Computer Science Institute, Charles University, Prague, Czech Republic\\
\email{koutecky@iuuk.mff.cuni.cz}\\ \and
Eindhoven University of Technology, Eindhoven, The Netherlands\\
\email{a.a.lassota@tue.nl}\and
Faculty of Data and Decisions Sciences, The Technion, Haifa, Israel\\
\email{levinas@ie.technion.ac.il}\\}

\maketitle

\begin{abstract}
We study fundamental block-structured integer programs called tree-fold and multi-stage IPs. Tree-fold IPs admit a constraint matrix with independent blocks linked together by few constraints in a recursive pattern; and transposing their constraint matrix yields multi-stage IPs. 
The state-of-the-art algorithms to solve these IPs have an exponential gap in their running times, making it natural to ask whether this gap is inherent.
We answer this question affirmative. Assuming the Exponential Time Hypothesis, we prove lower bounds showing that the exponential difference is necessary, and that the known algorithms are near optimal.
Moreover, we prove unconditional lower bounds on the norms of the Graver basis, a fundamental building block of all known algorithms to solve these IPs.
This shows that none of the current approaches can be improved beyond this bound.
\keywords{integer programming, $n$-fold, tree-fold, multi-stage, (unconditional) lower bounds, ETH, subset sum}
\end{abstract}

\section{Introduction}
\label{sec:intro}
In the past years, there has been tremendous progress in the algorithmic theory and in the applications of \emph{block-structured integer programming}. An \emph{integer program~(IP)} in standard form is the problem $\min \{c^\intercal x: \ Ax = b,\, l \leq x \leq u,\, x \in \Z^n\}$. We deal with the setting where the constraint matrix $A$ exhibits a certain block structure.
One of the most prominent block-structure are \emph{$n$-fold} integer programs~($n$-fold IPs), in which the constraint matrix~$A$ decomposes into a block-diagonal matrix after the first few rows are deleted.
In other words, $n$-fold IPs are constructed from independent IPs of small dimensions that are linked by a few constraints.
The generalization, in which the diagonal blocks themselves have an $n$-fold structure recursively, is called \emph{tree-fold} IPs.
The transpose of a tree-fold matrix yields another class of highly relevant constraint matrices called \emph{multi-stage} matrix. We formally define those next.

\begin{definition}[Tree-fold and multi-stage matrices]
Any matrix $A \in \Z^{m \times n}$ is a tree-fold matrix with one level and level size $m$.
A matrix $A$ is a tree-fold matrix with $\tau \geq 2$ levels and level sizes $\sigma = (\sigma_1, \dots, \sigma_\tau) \in \Z_{\geq 1}^{\tau}$
if deleting the first $\sigma_1$ rows of $A$ decomposes the matrix into a block-diagonal matrix, where each block is a tree-fold matrix with $\tau-1$ levels and level sizes~$(\sigma_2, \dots, \sigma_\tau)$.

If $A^\intercal$ is a tree-fold matrix with $\tau$ levels and level sizes $\sigma$, then $A$ is called a \emph{multi-stage matrix} with $\tau$ stages and stage sizes $\sigma$.
\end{definition}

For a schematic picture, see Figure~\ref{fig:treefoldmultistage}
   
\begin{figure}
\centering
\resizebox{0.8\textwidth}{!}{
\begin{circuitikz}
\tikzstyle{every node}=[font=\LARGE]
\draw [, line width=0.5pt ] (2.5,12.25) rectangle (3.25,4.75);
\draw [, line width=0.5pt ] (3.5,12.25) rectangle (5.25,8.5);
\draw [, line width=0.5pt ] (5.5,12.25) rectangle (6.5,11);
\draw [, line width=0.5pt ] (6.75,11) rectangle (7.75,9.75);
\draw [, line width=0.5pt ] (8,9.75) rectangle (9,8.5);
\draw [, line width=0.5pt ] (9.25,8.5) rectangle (11.25,4.75);
\draw [, line width=0.5pt ] (11.5,8.5) rectangle (12.5,7.25);
\draw [, line width=0.5pt ] (12.75,7.25) rectangle (13.75,6);
\draw (14,6) rectangle (15,4.75);
\draw [line width=1pt, short] (2.5,12.5) -- (2,12.5);
\draw [line width=1pt, short] (2,12.5) -- (2,4.5);
\draw [line width=1pt, short] (2,4.5) -- (2.5,4.5);
\draw [line width=1pt, short] (14.75,4.5) -- (15.25,4.5);
\draw [line width=1pt, short] (15.25,4.5) -- (15.25,12.5);
\draw [line width=1pt, short] (15.25,12.5) -- (14.75,12.5);
\draw [, line width=0.5pt ] (20,12.25) rectangle (30,11.25);
\draw [, line width=0.5pt ] (20,11) rectangle (25,9.75);
\draw [, line width=0.5pt ] (20,9.5) rectangle (22.5,9.25);
\draw [, line width=0.5pt ] (20,9) rectangle (21.25,8.5);
\draw [, line width=0.5pt ] (21.25,8.25) rectangle (22.5,7.75);
\draw [, line width=0.5pt ] (22.5,7.5) rectangle (25,7.25);
\draw [, line width=0.5pt ] (22.5,7) rectangle (23.75,6.5);
\draw [, line width=0.5pt ] (23.75,6.25) rectangle (25,5.75);
\draw [, line width=0.5pt ] (25,5.5) rectangle (30,4.25);
\draw [, line width=0.5pt ] (25,4) rectangle (27.5,3.75);
\draw [, line width=0.5pt ] (25,3.5) rectangle (26.25,3);
\draw [, line width=0.5pt ] (26.25,2.75) rectangle (27.5,2.25);
\draw [, line width=0.5pt ] (27.5,2) rectangle (30,1.75);
\draw [, line width=0.5pt ] (27.5,1.5) rectangle (28.75,1);
\draw [, line width=0.5pt ] (28.75,0.75) rectangle (30,0.25);
\draw [line width=1pt, short] (20,12.5) -- (19.5,12.5);
\draw [line width=1pt, short] (19.5,12.5) -- (19.5,0);
\draw [line width=1pt, short] (19.5,0) -- (20,0);
\draw [line width=1pt, short] (29.75,0) -- (30.25,0);
\draw [line width=1pt, short] (30.25,0) -- (30.25,12.5);
\draw [line width=1pt, short] (30.25,12.5) -- (29.75,12.5);
\end{circuitikz}
}
\caption{On the left, a schematic multi-stage with three levels is presented. On the right, a schematic tree-fold with 4 layers is pictured. All entries within a rectangle can be non-zero, all entries outside of the rectangles must be zero.}\label{fig:treefoldmultistage}
\end{figure}
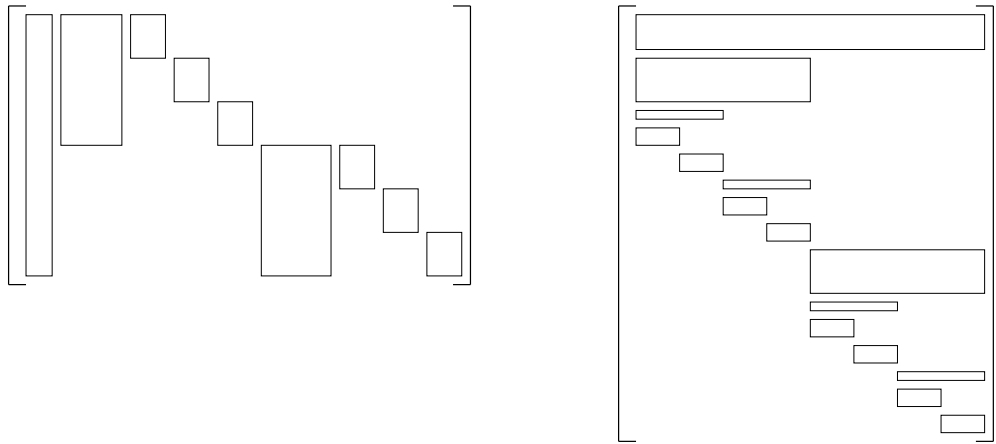

\begin{definition}[$n$-fold and $2$-stage stochastic matrices]
The special case of a tree-fold matrix with two levels is called $n$-fold. Respectively, a multi-stage matrix with two stages is called $2$-stage stochastic matrix.
\end{definition}

For a picture, see Figure~\ref{fig:nfoldtwostage}

\begin{figure}
\centering
\begin{tikzpicture}
    \node at (0,0) {$\begin{bmatrix}
        A_1 & D_1 & & & \\
        A_2 & & D_2 & & \\
        \vdots & & & \ddots & \\
        A_n & & & & D_n
    \end{bmatrix}$};

    \node at (6,0) {$\begin{bmatrix}
        C_1 & C_2  & \dots & C_n \\
        D_1 & & & \\
        & D_2 & & \\
        & & \ddots & \\
        & & & D_n
    \end{bmatrix}$};
\end{tikzpicture}
\caption{On the left, a $2$-stage stochastic matrix with blocks $A_i$ and $D_i$, $i \in [n]$ is presented. On the right, an $n$-fold matrix with blocks $C_i$ and $D_i$, $i \in [n]$ is pictured. All entries not belonging to a block are zero.}\label{fig:nfoldtwostage}
\end{figure}

The study of $n$-fold IPs was initiated in~\cite{DBLP:journals/disopt/LoeraHOW08}. A milestone was the fixed-parameter tractable algorithm by Hemmecke et al.~\cite{hemmecke2013n} whose running time depends polynomially on the dimension $n$, and exponentially only on the sizes of the small blocks.
Faster and more generally applicable algorithms were subsequently developed, including strongly polynomial algorithms, near-linear (in $n$) time algorithms~\cite{DBLP:journals/disopt/LoeraHOW08,DBLP:conf/icalp/EisenbrandHK18,koutecky,DBLP:journals/siamdm/JansenLR20,DBLP:conf/soda/CslovjecsekEHRW21,eisenbrand2019algorithmic}, and a new result where entries in the global part can be large~\cite{cslovjecsek2023parameterized}.
At the same time,
$n$-fold IPs found many applications,
for instance in scheduling problems,
stringology, graph problems, and computational social choice, see e.g.~\cite{DBLP:journals/disopt/GavenciakKK22,DBLP:journals/ol/HemmeckeOW11,DBLP:journals/mp/KnopKM20,maack,DBLP:conf/aips/JansenLMP21,DBLP:journals/scheduling/KnopK18,DBLP:conf/esa/KnopK22,DBLP:journals/orl/KnopKLMO21,DBLP:journals/teco/KnopKM20,DBLP:conf/stacs/0011MYZ17},
solving long-standing open problems.

Multi-stage IPs and their special case for $\tau = 2$ (called $2$-stage stochastic IPs) have been studied even longer than $n$-fold IPs, going back to the work of Aschenbrenner and Hemmecke~\cite{AH}.
They are commonly used in stochastic programming and often used in practice to model uncertainty of decision making over time~\cite{DBLP:journals/eor/Albareda-SambolaVF06,DBLP:journals/mor/DempsterFJLLK83,kall1994stochastic,DBLP:journals/ior/LaporteLM94}.
The first known upper bounds on the complexity of solving multi-stage IPs had a huge and non-explicit dependence on $\tau$ and $\sigma$ in their running time.
The upper bound was subsequently improved to have an exponential tower of height~$\tau$ with $\|\sigma\|_1$ appearing at the top (times a polynomial in the encoding length of the input $I$), and only very recently to have a triple-exponential (in $\|\sigma\|_1$) running time (times $|I|^{O(1)}$)~\cite{DBLP:conf/esa/CslovjecsekEPVW21,eisenbrand2019algorithmic,DBLP:journals/mp/Klein22,klein2021collapsing}.
In~\cite{cslovjecsek2023parameterized}, a new FPT time algorithm to decide feasibility of a $2$-stage stochastic IP is presented that can also handle large entries in the global part (i.e., the largest entry in the global part is not a parameter). 

Intriguingly, the algorithms for tree-fold IPs have a running time that depends only double-exponentially on $\| \sigma \|_1$~\cite{eisenbrand2019algorithmic}.
A natural response to seeing this exponential gap is to ask whether it is inherent, and whether multi-stage integer programming is indeed harder than tree-fold IPs despite their similar nature of constraint matrices.
This question was partially answered when Jansen et al.~\cite{jansen2021double} showed that, assuming the Exponential Time Hypothesis (ETH), $2$-stage stochastic IPs require a double-exponential running time in $\|\sigma\|_1$. This contrasts with known single-exponential algorithms w.r.t. $\|\sigma\|_1$ for $n$-fold IPs.
Also for tree-fold IPs, a double-exponential lower bound w.r.t. $\tau$ is known~\cite{knop2020tight}, although this is stated in terms of the parameter \emph{tree-depth}, which is linked to the largest number of non-zeroes in any column (see~Section~\ref{sec:tree-depth}), still leaving the exact dependence on the number of levels $\tau$ open.

We settle the complexity of all aforementioned block-structured integer programs, answer the question whether the exponential gap is necessary affirmatively, and (nearly) close the gaps between algorithms and lower bounds:
\begin{enumerate}
\item (\textbf{Theorem~\ref{thm:multi-stage}}) We show an ETH-based lower bound for multi-stage IPs that is triple-exponential in the number of levels $\tau$ when the level sizes $\sigma$ are constant, and recovers the existing double-exponential lower bound (in $\|\sigma\|_1$)~\cite{jansen2021double} as a special case.
This lower bound is comparable to the running time of the currently fastest algorithm~\cite{klein2021collapsing}.
\item (\textbf{Theorem~\ref{thm:tree-fold}}) We show an ETH-based lower bound for tree-fold IPs which recovers as a special case the result of~\cite{knop2020tight}, and is comparable to the running time of the currently best algorithm~\cite{eisenbrand2019algorithmic}. Our bound shows more accurately how the running time depends on $\tau$.
\item (\textbf{Corollary~\ref{cor:n-fold}}) A particularly interesting consequence of Theorem~\ref{thm:tree-fold} is a lower bound of roughly $2^{\sigma_1 \sigma_2}$ for the special class of $n$-fold IPs.
\end{enumerate}

The core technical idea behind the lower bounds in this paper relates bi- and tri-diagonal matrices to block-structured matrices, see Section~\ref{sec:BiTri}.
\begin{enumerate}
\item[4.] (\textbf{Lemma~\ref{lem:diagonal}} and \textbf{Corollary~\ref{cor:diagonal}}) Every bi-/tri-diagonal matrix can be reordered to obtain a multi-stage (Lemma~\ref{lem:diagonal}) and a tree-fold (Corollary~\ref{cor:diagonal}) matrix.
\end{enumerate}
We believe this result is of independent interest, as it provides a new tool for solving combinatorial problems: Formulating any problem as a matrix with constant bandwidth is enough to be able to apply the tree-fold or multi-stage integer programming algorithms to solve it efficiently.

Since the hard instances we construct have bi- or tri-diagonal structure, we are able to obtain the required lower bounds.
For the multi-stage IPs lower bound, we combine this idea with splitting one complicated constraint carefully into several sparse constraints with only few variables, similarly to the well-known reduction from a \textsc{$3$-Sat} formula to a \textsc{$3$-Sat} formula where each variable only appears constantly often.
This is done in Section~\ref{sec:MultiStage}.
Section~\ref{sec:TreeFold} continues with the lower bound for tree-fold IPs.

The central concept to all the aforementioned algorithms is the \emph{Graver basis}~$\mathcal{G}(A)$ of the constraint matrix $A$, or in case of the new result~\cite{cslovjecsek2023parameterized}, of some reduced constraint matrix $A'$ with block-structure and small entries, and a closer examination shows that the main factor driving the complexity of those algorithms are the $\ell_\infty$- and $\ell_1$-norms of elements of $\mathcal{G}(A)$ ($\mathcal{G}(A')$).
Improved bounds on those norms would immediately lead to improved algorithms, contradicting ETH.
We show that the ETH cannot be violated in this way by giving \emph{unconditional} lower bounds on the norms of elements of $\mathcal{G}(A)$ for block-structured matrices.

We demonstrate these norm lower bounds on the matrices used in the proofs of Theorems~\ref{thm:multi-stage} and~\ref{thm:tree-fold}, as described in Section~\ref{sec:Graver}. Our unconditional lower bounds for Graver basis elements traces back to the same matrix hardness was proven for in~\cite{DBLP:journals/mp/Klein22}. Extending these results to multi-stage IPs and advancing to tree-fold and $n$-fold IPs though required the here presented concept of rearranging bi- and tri-diagonal matrices.
In Section~\ref{sec:tree-depth}, we briefly express our work in terms of the parameter \emph{tree-depth} that is also commonly used to describe block-structured integer programs.

This paper partially builts on an arXiv preprint~\cite{eisenbrand2019algorithmic}.
The present paper provides stronger and novel results. Specifically, it introduces bi- and tri-diagonal matrices formally as crucial components for the hardness proofs. This approach enables us to reframe the results for tree-fold IPs in terms of \emph{levels} and the maximum number of rows in a level, rather than stating them solely on tree-depth as in both~\cite{eisenbrand2019algorithmic,knop2020tight}, which is the product of these parameters. 
This refined perspective allows for a more nuanced analysis of $n$-fold IPs which derives at its (near tight) lower bound, which was unattainable with previous methods.
In the context of multi-stage IPs, our investigation picks up from where the proof in~\cite{jansen2021double} for $2$-stage stochastic IPs concluded. Notably, the proof for $2$-stage stochastic IPs did not involve or observe any potential rearrangement of the stages; the first stage comprised only one variable, and the $2$-stage stochastic structure emerged naturally from the underlying problem.
 
\section{About Bi-Diagonal and Tri-Diagonal Matrices}
\label{sec:BiTri}
This section is devoted to bi- and tri-diagonal matrices, i.e., matrices in which all non-zero entries are are on two or three consecutive diagonals, respectively.
\begin{definition}[Bi-diagonal, Tri-diagonal Matrix]
A matrix $\A \in \Z^{m \times n}$ is \emph{bi-diagonal} if $a_{i,j} = 0$ for $i \notin \{j,j+1\}$.
A matrix $\A \in \Z^{m \times n}$ is \emph{tri-diagonal} if $a_{i,j} = 0$ for $i \notin \{j,j+1,j+2\}$.
\end{definition}
We show that any bi- or tri-diagonal matrix $A$ can be viewed as a multi-stage or tree-fold matrix with the desired parameters.
While the next lemma requires quite specific matrix dimensions, note that once a matrix is bi- or tri-diagonal, we can always add zero rows or columns, and it remains bi- or tri-diagonal.

\begin{lemma}
\label{lem:diagonal}
Let $\tau \geq 1$, $\sigma \in \Z_{\geq 1}^\tau$, and define $S \df \prod_{i=1}^\tau (\sigma_i + 1)$.
\begin{itemize}
\item[$i)$] Let $\A \in \Z^{ S \times S - 1}$ be bi-diagonal.
Then $\A$ is a multi-stage matrix with $\tau$ stages and stage sizes $\sigma$, up to column permutations.
\item[$ii)$] Let $\A \in \Z^{2S \times 2S - 2}$ be tri-diagonal.
Then $\A$ is a multi-stage matrix with $\tau$ stages and stage sizes $2\sigma$, up to column permutations.
\end{itemize}
\end{lemma}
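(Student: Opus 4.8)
The plan is to prove part $i)$ by induction on $\tau$, and then obtain part $ii)$ as a corollary by a blocking trick. For $\tau = 1$ the claim is trivial: $S = \sigma_1 + 1$, and any matrix $\A \in \Z^{S \times (S-1)}$ is by definition a multi-stage matrix with one stage and stage size $\sigma_1 = S - 1 = $ the number of rows, which matches the base case of the tree-fold/multi-stage definition (a multi-stage matrix with one stage of size $m$ is just any $m \times n$ matrix, here $m = S$; the fact that $\sigma_1 = S-1 \neq S$ is harmless since the definition only prescribes the number of rows in the top stage, and we may pad). So the real content is the inductive step.

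For the inductive step, write $S = (\sigma_1 + 1) \cdot S'$ where $S' \df \prod_{i=2}^\tau(\sigma_i + 1)$. The key structural observation about a bi-diagonal matrix $\A \in \Z^{S \times (S-1)}$ is that it is already ``almost'' block-diagonal: column $j$ only touches rows $j$ and $j+1$. I would partition the $S-1$ columns into $\sigma_1 + 1$ consecutive groups, where group $k$ (for $k = 1, \dots, \sigma_1+1$) consists of columns indexed roughly $(k-1)S' + 1, \dots, kS' - 1$, i.e., $S' - 1$ columns each, and the $\sigma_1$ ``boundary'' columns with indices $kS'$ (for $k = 1, \dots, \sigma_1$) are set aside. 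A boundary column $kS'$ has nonzeros only in rows $kS'$ and $kS' + 1$, which lie in the row-ranges of groups $k$ and $k+1$. Now reorder the columns so that the $\sigma_1$ boundary columns come first, followed by the $\sigma_1 + 1$ groups in order. After this permutation, deleting the top $\sigma_1$ rows — wait, that is not quite what we want. Instead: the $\sigma_1 + 1$ column groups, restricted to their natural row ranges, form $\sigma_1 + 1$ diagonal blocks, each a bi-diagonal matrix of size $S' \times (S'-1)$; the only columns linking distinct blocks are the $\sigma_1$ boundary columns, and each such boundary column, after we move it to the front, contributes entries in exactly $\sigma_1$ designated ``linking rows''. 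More carefully: we want $\sigma_1$ rows such that deleting them disconnects $\A$ into $\sigma_1 + 1$ blocks. The candidates are the rows $kS'+1$ for $k=1,\dots,\sigma_1$ (the ``second'' row hit by each boundary column). Deleting these $\sigma_1$ rows: the boundary column $kS'$ then has a single nonzero, in row $kS'$, which belongs to block $k$ — so it joins block $k$, and no column crosses block boundaries any longer. Hence after deleting $\sigma_1$ rows and permuting columns, $\A$ decomposes into $\sigma_1+1$ blocks, each a bi-diagonal matrix in $\Z^{S' \times (S'-1)}$ (one block absorbs an extra column but we can pad with a zero column to reach $\Z^{S' \times (S'-1)}$, or argue each block has at most $S'$ rows and at most $S'-1$ columns plus possibly one boundary column). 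By the induction hypothesis each block is, up to column permutation, a multi-stage matrix with $\tau - 1$ stages and sizes $(\sigma_2, \dots, \sigma_\tau)$. Transposing: $\A^\intercal$ is then a tree-fold matrix with $\tau$ levels and level sizes $\sigma$, which is exactly the statement that $\A$ is multi-stage with $\tau$ stages and sizes $\sigma$.

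For part $ii)$, given a tri-diagonal $\A \in \Z^{2S \times (2S-2)}$, I would group rows into $S$ consecutive pairs and columns into $S - 1$ consecutive pairs. A ``super-column'' (pair of columns $2j-1, 2j$) has nonzeros only in rows $2j-1, \dots, 2j+2$, i.e., in super-rows $j$ and $j+1$. Thus, viewed over the ring of $2 \times 2$ (or rather $2 \times 2$-shaped) blocks, $\A$ becomes a bi-diagonal block matrix with $S$ block-rows and $S-1$ block-columns. Applying the argument of part $i)$ at the level of blocks — each time we ``delete a row'' in the block world we delete the corresponding $2$ actual rows — yields a multi-stage structure with the same tree shape but every stage size doubled, i.e., stage sizes $2\sigma$. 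The bookkeeping is that deleting one block-row removes $2$ genuine rows, and the final leaf blocks are tri-diagonal matrices of size $2S' \times (2S'-2)$, handled by the induction; padding with zero rows/columns keeps everything tri-diagonal as noted before the lemma.

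The main obstacle I anticipate is purely organizational rather than conceptual: getting the index arithmetic exactly right so that the column groups and the $\sigma_1$ deleted rows line up perfectly with $S = (\sigma_1+1)S'$, and verifying that each of the $\sigma_1 + 1$ resulting diagonal blocks really has the prescribed dimensions $S' \times (S'-1)$ (this is where the ``$-1$'' in $S - 1$ columns is spent — it is consumed exactly once at the top level, and the induction redistributes it), together with a clean statement of how zero-padding interacts with the ``up to column permutations'' clause. The tri-diagonal case adds the extra subtlety that a $2\times 2$ block is not literally scalar, so I would be careful to phrase part $ii)$ as: pair up coordinates, observe the induced block-bidiagonal structure, and invoke part $i)$'s combinatorics verbatim on the block indices while tracking the factor of $2$ in all row counts.
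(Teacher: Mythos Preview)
Your first instinct is correct and is exactly the paper's proof: permute the $\sigma_1$ boundary columns at indices $kS'$ ($k=1,\dots,\sigma_1$) to the front; the remaining $(\sigma_1+1)(S'-1)$ columns then split into $\sigma_1+1$ bi-diagonal blocks of shape $S'\times(S'-1)$, and you recurse on those. That is the entire argument. The reason you said ``wait'' and abandoned it is a misreading of the multi-stage definition: for a multi-stage matrix, only the \emph{columns} after the first $\sigma_1$ must be block-diagonal; the first $\sigma_1$ columns may have arbitrary entries in any rows (look at the $A_i$'s in the 2-stage stochastic picture). So there is no need to also find $\sigma_1$ special rows, and no transposition step is required at the end.

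The row-deletion detour you switch to proves the wrong statement. Deleting rows $kS'+1$ and observing a block decomposition shows (after permuting those rows to the top) that $\A$ is \emph{tree-fold}, i.e., that $\A^\intercal$ is multi-stage --- not that $\A$ is multi-stage. Your final sentence has the transposition backwards. Moreover, the block sizes in that detour do not come out to $S'\times(S'-1)$ cleanly (as you noticed), whereas in the column approach they do, with no padding needed: there are exactly $(\sigma_1+1)(S'-1)$ non-boundary columns and $S=(\sigma_1+1)S'$ rows. The same row/column confusion appears in your base case: for $\tau=1$ the stage size is the number of \emph{columns} $S-1=\sigma_1$, which matches directly. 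Your blocking idea for part $ii)$ is fine and is equivalent to what the paper does (the paper phrases it as moving two consecutive columns to the front at each step instead of one).
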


\paragraph*{Proof idea.}
    We prove both claims by induction on the number of stages~$\tau$. The claims trivially hold for $\tau = 1$ as any matrix can be interpreted as a multi-stage matrix of just one stage. 

    i) As for bi-diagonal matrices, note that if we delete a column $i$, the matrix separates into two independent bi-diagonal matrices, one from column $1$ to column $i-1$, and the second from columns $(i+1)$ until $(S-1)$ (and the corresponding non-zero rows respectively).

    We use this idea of splitting the matrix into independent matrices as follows: for $\tau > 1$, we permute $\sigma_1$ equidistant columns to the front. This gives us a $2$-stage stochastic matrix with $\sigma_1$ columns in the first stage and a second stage that are exactly the independent matrices. For each of these independent matrices, we now need to find a re-arrangement into a multi-stage matrix with $\tau-1$ stages and sizes $\sigma_2, \dots, \sigma_\tau$, which is possible due to the induction hypothesis. 

    ii) The only adaption needed for the second claim is to shift two consecutive columns for each column chosen in i) to the front to split the matrix into independent matrices. 

\begin{proof}
$i):$
The proof is by induction on $\tau$, and the base case for $\tau=1$ is trivial.
For $\tau > 1$, we briefly lay out the idea before providing the formal construction.
Observe that if we delete any column, we can partition the remaining matrix into two blocks $\binom{A_1}{0}$, $\binom{0}{A_2}$ whose columns are orthogonal to each other as depicted below for column $j$:
\[
\A = \left(
\begin{array}{ccc|c|ccc}
\phantom{0} & \phantom{0} & \phantom{0} & 0 & \phantom{0} & \phantom{0} & \phantom{0} \\
 & A_1 & & \svdots & & 0 & \phantom{0} \\
 & & & a_{j,j} & & & \phantom{0} \\
 \hline
 & & & a_{j+1,j} & & & \phantom{0} \\
 & 0 & & \svdots & & A_2 \\
 & & & 0 & & & \phantom{0}
\end{array}
\right).
\]
Consequently, permuting $\sigma_1$ equidistant columns (that is, we take every $\ell$th column for an appropriate $\ell$ defined below) to the front, we obtain a $2$-stage stochastic matrix with stage sizes $(\sigma_1,S^\prime)$, where $S^\prime = \prod_{i=2}^\tau (\sigma_i + 1)$ and whose $\sigma_1+1$ diagonal blocks are again bi-diagonal, allowing us to induct on them.

Formally, let $A^{(0)} \in \Z^{S \times \sigma_1}$ be the matrix comprising the columns of $\A$ with index in $L \df \{j S^\prime: \ j = 1,\dots,\sigma_1\}$.
For $k=1,\dots,\sigma_1+1$, let
\[
A^{(k)} \in \Z^{S^\prime \times S^\prime - 1}
\]
denote the matrix where the $(i,j)$-th entry corresponds to the
$( (k-1) S^\prime + i,\, (k-1)S^\prime + j)$-th entry of $\A$.

The following $2$-stage stochastic matrix arises from $\A$ by permuting $A_0$ to the front:
\[
\left(
\begin{array}{c|}
\\
\\
A^{(0)}
\\
\\
\phantom{0}
\end{array}
\begin{array}{ccc}
\begin{tikzpicture}
\node[rectangle,draw] (r) at (0,0) {$A^{(1)}$};
\end{tikzpicture} \\
& \sddots \\
& &
\begin{tikzpicture}
\node[rectangle,draw] (r) at (0,0) {$A^{(\sigma_1+1)}$};
\end{tikzpicture}
\end{array}
\right).
\]
We have
\[
A^{(k)}_{i,j} = \A_{(k-1)S^\prime + i , (k-1)S^\prime + j} = 0
\]
whenever $i \notin \{j,j+1\}$, hence each matrix $A^{(k)}$ is again bi-diagonal.
By induction, each $A^{(k)}$ is a multi-stage matrix with $\tau - 1$ stages and stage sizes $(\sigma_2,\dots,\sigma_{\tau})$, after a suitable permutation.

$ii):$
The proof follows the same argument as for the bi-diagonal case, the only difference being that we have to delete two columns in order to split the matrix into independent blocks.

For $\tau = 1$, there is nothing to show.
For $\tau \geq 2$, define
\[
L \df \bigcup_{j=1}^{\sigma_1} \{2jS^\prime -1, 2jS^\prime\}
\]
and let $A^{(0)} \in \Z^{2S \times 2\sigma_1}$ comprise the columns with indices in $L$.
For $k=1,\dots,\sigma_1+1$, let the matrix $A^{(k)} \in \Z^{2S^\prime \times 2 S^\prime-2}$ arise from $\A$ by restricting to row indices
$2(k-1)S^\prime +1,\dots,2kS^\prime$ and column indices $2(k-1)S^\prime + 1, \dots, 2kS^\prime - 2$.
Again, the matrices~$A^{(k)}$ are tri-diagonal, and applying the induction step on them yields the claim. \qed
\end{proof}
Clearly, the result can be extended to any band-width on the diagonal. Also, the number of columns can be chosen individually even within a specific stage.
However, the simpler version above suffices for our purposes.

By considering $\A^\intercal$, we obtain the following corollary.
\begin{corollary}
\label{cor:diagonal}
Let $\tau \geq 1$, $\sigma \in \Z_{\geq 1}^\tau$, and define $S \df \prod_{i=1}^\tau (\sigma_i + 1)$.
\begin{itemize}
\item[$i)$] Let $\A^\intercal \in \Z^{ S \times S - 1}$ be bi-diagonal.
Then $\A$ is a tree-fold matrix with $\tau$ levels and level sizes $\sigma$, up to row permutations.
\item[$ii)$] Let $\A^\intercal \in \Z^{2S \times 2S - 2}$ be tri-diagonal.
Then $\A$ is a tree-fold matrix with $\tau$ levels and level sizes $2\sigma$, up to row permutations.
\end{itemize}
\end{corollary}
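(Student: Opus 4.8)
The plan is to obtain the corollary as an immediate consequence of Lemma~\ref{lem:diagonal} by transposing everything. Recall that, by definition, a matrix $B$ is a tree-fold matrix with $\tau$ levels and level sizes $\sigma$ if and only if $B^\intercal$ is a multi-stage matrix with $\tau$ stages and stage sizes $\sigma$. Hence, to prove that $\A$ is a tree-fold matrix up to row permutations, it suffices to produce a permutation matrix $P$ such that $P^\intercal \A$ is tree-fold, i.e., such that $(P^\intercal \A)^\intercal = \A^\intercal P$ is multi-stage; and such a $P$ is exactly what Lemma~\ref{lem:diagonal}, applied to $\A^\intercal$, supplies.

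Concretely, for part $i)$ I would argue as follows. The hypothesis is that $\A^\intercal \in \Z^{S \times S-1}$ is bi-diagonal, so Lemma~\ref{lem:diagonal}$i)$ applies to $\A^\intercal$ and yields a column permutation — a permutation matrix $P$ — with $\A^\intercal P$ a multi-stage matrix with $\tau$ stages and stage sizes $\sigma$. Transposing gives that $P^\intercal \A = (\A^\intercal P)^\intercal$ is a tree-fold matrix with $\tau$ levels and level sizes $\sigma$. Since $P^\intercal$ is again a permutation matrix, $P^\intercal \A$ arises from $\A$ by permuting rows, which is precisely the claim that $\A$ is tree-fold with $\tau$ levels and level sizes $\sigma$ up to row permutations.

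Part $ii)$ is handled in exactly the same way: apply Lemma~\ref{lem:diagonal}$ii)$ to the tri-diagonal matrix $\A^\intercal \in \Z^{2S \times 2S-2}$ to obtain a permutation matrix $P$ such that $\A^\intercal P$ is a multi-stage matrix with $\tau$ stages and stage sizes $2\sigma$, then transpose to conclude that $P^\intercal \A$ is a tree-fold matrix with $\tau$ levels and level sizes $2\sigma$, with $P^\intercal \A$ a row permutation of $\A$.

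There is no real obstacle here; the one point deserving explicit mention (more bookkeeping than difficulty) is that "up to column permutations" for $\A^\intercal$ corresponds precisely to "up to row permutations" for $\A$, which holds because $(MP)^\intercal = P^\intercal M^\intercal$ and the transpose of a permutation matrix is a permutation matrix. No combinatorial content beyond Lemma~\ref{lem:diagonal} and the definitional transpose-duality between tree-fold and multi-stage matrices is required.
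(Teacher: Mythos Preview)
Your proposal is correct and matches the paper's approach exactly: the paper derives the corollary in one line ``By considering $\A^\intercal$, we obtain the following corollary,'' i.e., apply Lemma~\ref{lem:diagonal} to $\A^\intercal$ and use the definitional transpose-duality between multi-stage and tree-fold matrices. Your explicit remark that column permutations of $\A^\intercal$ correspond to row permutations of $\A$ just spells out what the paper leaves implicit.
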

We close this section with a brief remark.
If $\A$ itself is bi-diagonal, we can add a zero column in the front.
This way we obtain a matrix $\tilde{A}$ for which $\tilde{A}^{\intercal}$ is bi-diagonal.
Similarly, we can add two columns in the tri-diagonal case.

\section{Multi-Stage Integer Programming}
\label{sec:MultiStage}
This section presents our main result regarding the hardness of multi-stage IPs. In particular, we reduce \textsc{3-Sat} to multi-stage IPs, proving the following:
\begin{theorem}[A lower bound for multi-stage IPs]
\label{thm:multi-stage}
For every fixed number of stages $\tau \geq 1$ and stage sizes $\sigma \in \Z^\tau_{\geq 1}$, there is no algorithm solving every instance $I$ of multi-stage integer programming in time
$2^{2^{S^{o(1)}}} |I|^{O(1)}$, where $S = \prod_{i=1}^\tau (\sigma_i + 1)$ and $|I|$ is the encoding length of $I$, unless the ETH fails.
\end{theorem}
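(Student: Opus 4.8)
\emph{Proof plan.} The plan is a polynomial-time reduction from \textsc{3-Sat} to multi-stage integer programming that costs only mildly in the parameters, together with the Exponential Time Hypothesis. By the sparsification lemma, ETH already forbids $2^{o(n)}$-time algorithms for \textsc{3-Sat} formulas $\varphi$ with $n$ variables and $m = \Theta(n)$ clauses, so it is enough to reduce from such formulas.

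I would start exactly where~\cite{jansen2021double} leaves off (building on~\cite{DBLP:journals/mp/Klein22}): from $\varphi$ one obtains a feasibility instance that is feasible iff $\varphi$ is satisfiable, whose constraint matrix is $2$-stage stochastic with a single first-stage variable and local blocks of only $\Theta(\log n)$ rows, and whose encoding length is $\poly(n)$. The idea is then to \emph{re-interpret} this fixed instance, via the rearrangement of Section~\ref{sec:BiTri}, as a multi-stage IP with many stages but constant stage sizes: since each local block has only $\Theta(\log n)$ rows, Lemma~\ref{lem:diagonal} will unfold one such block into $\Theta(\log\log n)$ constant-size stages, turning the whole instance into a multi-stage IP with $\tau = \Theta(\log\log n)$ stages, constant $\sigma$, and hence $S = \prod_{i=1}^\tau(\sigma_i+1) = (\log n)^{O(1)}$.

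To invoke Lemma~\ref{lem:diagonal} I would first make each local block bi-diagonal (up to permutations) without breaking the block structure; this is the technical heart. The tool is the surgery announced in the introduction: within each block, replace every variable occurring in many constraints by a chain of copies glued by equality constraints — the standard move turning \textsc{3-Sat} into bounded-occurrence \textsc{3-Sat} — and split each wide constraint $\sum_j a_j x_j = b$ into a chain of width-$3$ constraints $y_1 = a_1 x_1$, $y_k = y_{k-1} + a_k x_k$, $y_t = b$ using the partial sums $y_k$ as auxiliary variables; interleaving $x_j$ between its partial sums makes every new constraint touch three consecutive columns. Then one checks that (i) the blocks stay mutually independent and the single global column is untouched, (ii) each block's order grows by at most a polylogarithmic factor, so $S$ remains $(\log n)^{O(1)}$, and (iii) feasibility is preserved in both directions. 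Now Lemma~\ref{lem:diagonal}(i) rewrites every block as a multi-stage matrix of $\Theta(\log\log n)$ constant-size stages, and prepending the global variable gives the claimed multi-stage instance $I$, still with $|I| = \poly(n)$. Reading the same blocks with $\tau = 2$ instead (that is, without shrinking their bandwidth) keeps the $\|\sigma\|_1 = \Theta(\log n)$ reading and recovers the double-exponential-in-$\|\sigma\|_1$ bound of~\cite{jansen2021double} as a special case; intermediate factorizations of the block order into stage sizes interpolate between the two regimes and make the bound comparable to the algorithm of~\cite{klein2021collapsing}.

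Finally, an algorithm solving multi-stage IP in time $2^{2^{S^{o(1)}}}|I|^{O(1)}$ would decide $\varphi$ in time $2^{2^{(\log n)^{o(1)}}}\cdot\poly(n) = 2^{2^{o(\log n)}}\cdot\poly(n) = 2^{n^{o(1)}}\cdot\poly(n) = 2^{o(n)}$, contradicting ETH. I expect the main obstacle to be the third step: arranging the \textsc{3-Sat}-equivalent instance of~\cite{jansen2021double} so that the copying-and-splitting surgery makes it block-by-block bi-diagonal while keeping it ``shallow'' — logarithmic block order, one global variable — since it is precisely this shallowness that lets Lemma~\ref{lem:diagonal} compress the $\Theta(\log n)$-row stage into $\Theta(\log\log n)$ constant-size ones, producing the triple-exponential dependence on $\tau$.
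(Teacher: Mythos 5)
Your plan follows essentially the same route as the paper's proof: start from the $2$-stage stochastic instance of~\cite{jansen2021double}, split the one dense row of each block into a chain of width-$3$ partial-sum constraints with interleaved variables, reorder into a banded matrix, apply Lemma~\ref{lem:diagonal} blockwise with the global variable as the first stage, and conclude via ETH exactly as you compute. One small correction: the interleaving of the partial-sum constraints with the rows of the encoding matrices yields a \emph{tri}-diagonal (bandwidth-$3$) block, so you need Lemma~\ref{lem:diagonal}$(ii)$ rather than $(i)$; this only doubles the stage sizes and does not affect the asymptotics.
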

By considering multi-stage IPs with constant stage sizes, we immediately get that every algorithm has to have a triple exponential dependency on $\tau$ when parametrized by the number of stages $\tau$ and the largest value of any coefficient of $A$.
Thus, we cannot increase the dependency on the other parameters to decrease the dependency on $\tau$.
Specifically, we rule out any algorithm solving multi-stage IPs in time less than triple exponential in $\tau$ for the parameters $\tau, \sigma, \Delta, \|c\|_{\infty}$, $\|b\|_{\infty},\|\ell\|_{\infty}$.
This proves tightness of the triple-exponential complexity w.r.t. $\tau$ of the current state-of-the-art algorithm of Klein and Reuter~\cite{klein2021collapsing}.
 
\begin{corollary}
There is a family of multi-stage integer programming instances with $\tau \geq 1$ stages,
constant stage sizes, and entries of constant value which, assuming the ETH, cannot be solved in time $2^{2^{2^{o(\tau)}}} |I|^{O(1)}$ where $|I|$ is the encoding length of the respective instance $I$.
\end{corollary}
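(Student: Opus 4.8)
The plan is to obtain the corollary essentially for free from the reduction underlying Theorem~\ref{thm:multi-stage}, by fixing the stage sizes to a constant and then just reading off how the parameter $S$, and hence the claimed running-time bound, depends on $\tau$ in that regime. No new technical machinery is needed; the work is bookkeeping of asymptotics.

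First I would recall the shape of the reduction behind Theorem~\ref{thm:multi-stage}: from a \textsc{3-Sat} instance $\phi$ with $N$ variables (and, after sparsification, $O(N)$ clauses) it produces a multi-stage IP $I$ with $|I| = \poly(N)$, all coefficients, right-hand sides, and variable bounds of absolute value $O(1)$, and with $I$ feasible if and only if $\phi$ is satisfiable; here the number of stages $\tau$ and the stage sizes $\sigma$ can be chosen subject to $S = \prod_{i=1}^\tau(\sigma_i+1) = \Theta(\log N)$, which is the smallest $S$ the hard instances realise (anything smaller could not encode $\phi$). For the corollary I would specialise to $\sigma = \one \in \Z^\tau$, i.e.\ all stage sizes equal to $1$, so that $S = 2^\tau$; forcing $S = \Theta(\log N)$ then forces $\tau = \Theta(\log\log N)$. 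This gives a family of instances with an unbounded number of stages $\tau$, constant stage sizes, and constant entries, as required, and with $|I| = \poly(N)$.

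It then remains to observe that an algorithm solving every instance of this family in time $2^{2^{2^{o(\tau)}}}|I|^{O(1)}$ would contradict the ETH. Unwinding with $\tau = \Theta(\log\log N)$: $2^{o(\tau)} = 2^{o(\log\log N)} = (\log N)^{o(1)}$, hence $2^{2^{o(\tau)}} = 2^{(\log N)^{o(1)}} = N^{o(1)}$, hence $2^{2^{2^{o(\tau)}}} = 2^{N^{o(1)}} = 2^{o(N)}$; combined with $|I|^{O(1)} = \poly(N)$, this would decide satisfiability of $\phi$ in time $2^{o(N)}$, contradicting the ETH. (Equivalently: since $\sigma = \one$ yields $S^{o(1)} = (2^\tau)^{o(1)} = 2^{o(\tau)}$, the hypothetical running time is exactly of the form $2^{2^{S^{o(1)}}}|I|^{O(1)}$ excluded by Theorem~\ref{thm:multi-stage}.)

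I do not expect a genuine obstacle here. The two points requiring care are: (a) verifying, by inspecting the proof of Theorem~\ref{thm:multi-stage}, that its reduction can indeed be instantiated with all stage sizes equal to a constant while keeping the encoding length polynomial in $N$ — i.e.\ that the reduction's size blow-up is at most quasi-polynomial in $N$ when $S = \Theta(\log N)$, which is precisely what the theorem itself relies on; and (b) the routine chain of asymptotic estimates above, checking that each $o(\cdot)$ propagates correctly through the three nested exponentials. Point (a) is the only place where one must look inside the proof of Theorem~\ref{thm:multi-stage} rather than use it as a black box.
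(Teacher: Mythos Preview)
Your proposal is correct and matches the paper's (implicit) argument: the corollary is immediate from Theorem~\ref{thm:multi-stage} once one specialises to constant stage sizes, since then $S = \prod_i(\sigma_i+1) = 2^{\Theta(\tau)}$ and hence $S^{o(1)} = 2^{o(\tau)}$. One small correction to your point~(a): inspecting the proof of Theorem~\ref{thm:multi-stage}, the construction produces stage sizes $\sigma = (1,2s,\dots,2s)$ via Lemma~\ref{lem:diagonal}$(ii)$, so the smallest constant you can realise is $\sigma_i \leq 2$ rather than $\sigma = \one$; this does not affect your argument, since any constant bound on $\|\sigma\|_\infty$ still gives $S = 2^{\Theta(\tau)}$.
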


\paragraph*{Proof idea (of Theorem~\ref{thm:multi-stage}).}
    The proof starts where the proof for the double exponential lower bound in~\cite{jansen2021double} ends. Specifically, we are given a $2$-stage stochastic matrix which is, under ETH, the double exponentially hard instance for $2$-stage stochastic IPs. 
    
    The blocks of the second stage, that is, the independent diagonal matrices are each \emph{nearly} a diagonal matrices with an extra row with (possibly) just non-zero entries. Note that this row corresponds to a summation of scaled summands, i.e., it correspond to $az_1+bz_2+cz_3+dz_4+\dots$ for row entries $a, b, \dots$ and variables $z_1, z_2, \dots$ respectively. 
    
    This sum is equal to  $s_2 + cz_3 + dz_4$ with $az_1+bz_2 = s_2$. This trick is already used in~\cite{eisenbrand2019algorithmic}, and allows us by repeated application to split the full row into an equivalent bi-diagonal matrix. Combing this with the remaining entries in the column and some re-arrangement of the rows, we get a tri-diagonal matrix for each second stage block of the original $2$-stage stochastic IP. This is the desired form to apply Lemma~\ref{lem:diagonal} to each of the second stage matrices giving us a multi-stage IP with the desired dimensions w.r.t. the ETH to proof the theorem. 

\begin{proof}[of Theorem~\ref{thm:multi-stage}]
For $x \in \Z_{\geq 0}$ and $\eta = \lceil \log(x + 1) \rceil + 1$,
let $\enc(x) \in \Z^{1 \times \eta}$ denote the binary encoding of a non-negative number $x$, i.e., $x = \sum_{i=0}^{\eta-1} 2^{i}(\enc(x))_{i+1}$ with $(\enc(x))_{i+1} \in \{0,1\}$ denoting the $i$-th bit.

In~\cite{jansen2021double}, the \textsc{$3$-Sat} problem with $N$ variables and $O(N)$ clauses is reduced to $2$-stage stochastic IPs of the form
\begin{equation}
\label{ILP}
\tag{$2$-stage stochastic IP}
\begin{pmatrix}
-e_1 & D_1 \\
-e_1 & & D_2 \\
\svdots & & & \sddots \\
-e_1 & & & & D_n
\end{pmatrix}
x =
\begin{pmatrix}
e_t \\ e_t \\ \svdots \\ e_t
\end{pmatrix},
\end{equation}
where $e_i$ is the $i$-th canonic unit vector, $t$ is the number of rows in the $D_i$ matrices being of shape

\begin{align}
\label{eq:D}
D_i &= \begin{pmatrix}
\enc(q_i) & \enc(x_i)  & \enc(y_i) \\
E &  &  \\
 & E &  \\
 &  & E \\
0 \sdots 0 & 1\,0 \sdots 0 & 1\, 0 \sdots 0
\end{pmatrix}
\text{ with }
E =
\begin{pmatrix}
2 & -1 \\
 & 2 & -1  \\
 & & \sddots & \sddots \\
 & & & 2 & -1
\end{pmatrix},
\end{align}
where $E$ is called the \emph{encoding matrix}.
Here, $x_i, y_i, q_i \in \mathbb{N}_{\geq 0}$ are numbers generated by reducing a $3$-SAT formula to $2$-stage stochastic IPs, see~\cite{jansen2021double}.
All variables have a lower bound $0$, an upper bound which is $O(2^{O(N^2\log(N))})$, and the objective function is zero, i.e., only feasibility is sought.

There are $n \in O (N^2)$ blocks $D_i \in \Z^{t \times s}$ with $s = t + 1 \in O(\log(N))$,
all coefficients are bounded by $2$ in absolute value, and the first stage size is $\sigma_1 = 1$.

For $E \in \Z^{\eta \times \eta + 1}$, let $E^\dagger$ denote the matrix arising from $E$ by reversing the order of the rows and columns, i.e., $E^\dagger_{i,j} = E_{\eta - i + 1, \eta - j + 2}$. In particular, it has the following form:
\[
E^\dagger =
\begin{pmatrix}
-1 & 2 \\
 & -1 & 2 \\
 & & \sddots & \sddots \\
 & & & -1 & 2
\end{pmatrix}.
\]
Similarly, let $\enc^\dagger (x)$ arise from $\enc(x)$ by reversing the order of the entries.
Hence, by permuting rows and columns and inserting a zero row, a single block can be brought into the shape
\[
\begin{pmatrix}
\enc(q_k) & \enc^\dagger(x_k)  & \enc(y_k) \\
E & &  \\
0 \sdots 0 & 0 \sdots 0 & 0 \sdots 0 \\
 & E^\dagger & \\
0 \sdots 0 & 0 \sdots 0\,1     & 1\,0 \sdots 0 \\
 & & E
\end{pmatrix}
\fd
\begin{pmatrix}
c^{(k)} \\
B
\end{pmatrix} \in \Z^{(t+1) \times (t+1)}
\]
where $c^{(k)} \df (\enc(q_k), \enc^\dagger(x_k), \enc(y_k))$ concatenates the bit encodings, and $B^\intercal$ is a bi-diagonal matrix, i.e.,
\begin{align}
\label{eq:B-bi-diag}
B_{i,j} &= 0 \text{ for } i \notin \{j,j-1\}.
\end{align}
In the next step, similar to the approach in~\cite{eisenbrand2019algorithmic}, we replace ${ c^{(k)} \choose B}$ by a tri-diagonal matrix $T_k$.

Fix an index $k$ and denote ${c \choose B} \df {c^{(k)} \choose B}$ for short.
Denote the global variable of~(\ref{ILP}) by $r$, the variables of the diagonal block $D_k$ by $z \df z^{(k)}$ and consider the topmost constraint
$-r + cz = 0 \Leftrightarrow \sum_{i=1}^{t+1} c_i z_i = r$ of this block.
We introduce new variables $p \in \Z^{t+1}$ and constraints as follows:
\begin{align*}
p_1 &= - r \\
p_{i+1} - p_{i} &= c_i z_i	\hspace*{3cm} i=1,\dots,t \\
-p_{t+1} &= c_{t+1} z_{t+1}.
\end{align*}
Summing up all the equations, we retrieve $\sum_{i=1}^{t+1} c_iz_i = r$.
Alternating the variables $p_i$ and $z_i$, we can replace the constraint $\sum_{i=1}^{t+1} c_i z_i = r$ with the system
\begin{align*}
\left(
\begin{array}{c|cccccccc}
1 & 1 \\
 & -1 & -c_1 &  1 \\
 & & 		 & -1 & -c_2 & 1 \\
 & & & & 				 & & \sddots \\
 & & & & & 				   & & -1 & - c_{t+1}
\end{array}
\right)
 \begin{pmatrix}
r \\ p_1 \\ z_1 \\ p_2 \\ z_2 \\ \svdots \\ z_{t+1}
\end{pmatrix}
&= \begin{pmatrix}
0 \\ \svdots \\ 0
\end{pmatrix}.
\end{align*}
We call this new system $e_1 r + \tilde{S} \tilde{z} = 0$.
Formally, we have
\[
\tilde{S}_{i,j} \df \begin{cases}
1			& j = 2i - 1, \\
-c_{i-1}	& j = 2i - 2, \\
-1			& j = 2i - 3, \\
0 			& else,
\end{cases} \hspace*{40pt}
\text{for } \left\{
\begin{array}{l}
i = 1,\dots, t+2, \\
j = 1,\dots, 2t+2.
\end{array}  \right.
\]
In particular, this definition implies
\begin{align}
\label{eq:tilde-S}
\tilde{S}_{i,j} &= 0 \quad \text{ for } \quad i \notin \{\tfrac{j+1}{2}, \tfrac{j+2}{2}, \tfrac{j+3}{2}\}.
\end{align}
We obtain $\tilde{B}$ from $B$ accordingly by adding zero columns corresponding to the new variables $p$, and adding one zero row in the top.
Formally we define
\begin{align*}
\tilde{B}_{i,j} &\df
\begin{cases}
B_{i-1,j/2} & \text{ if } i \geq 2 \text{ and } j \text{ is even,} \\
0 & \text{ else.}
\end{cases}
\qquad \text{for }
\begin{cases}
i = 1, \dots, t + 1 \\
j = 1, \dots, 2t + 2
\end{cases}
\end{align*}
By Equation~\eqref{eq:B-bi-diag}, we can observe
\begin{align}
\label{eq:tilde-B}
 \tilde{B}_{i,j} &= 0 \text{ if } i \notin \{ \tfrac{j}{2}, \tfrac{j}{2} + 1 \}.
\end{align}
So far, we reformulated $-e_1r + D z = e_t$ as an equivalent system $e_1 r + {\tilde{S} \choose \tilde{B}} \tilde{z} = e_{\tilde{t}}$ for some index $\tilde{t}$.
Finally, we can permute the rows of ${\tilde{S} \choose \tilde{B}}$, alternatingly taking a row of $\tilde{S}$ and $\tilde{B}$, and obtain the system
\begin{align*}
e_1 r + \left(
\begin{array}{cccccccc}
 1 \\
 0 & 0 & \sdots \\
 -1 & -c_1 &  1 \\
 & \star & 0 & \star \\
 & 		 & -1 & -c_2 & 1 \\
 & 		 & & \star & 0 & \star  \\
 & & & 				 & & \sddots \\
 & & & & 				   & & -1 & - c_{t+1}
 \end{array} \right) \tilde{z} = 0,
\end{align*}
where $\star \in \{-1,0,2\}$ are the corresponding entries of $E$ and $E^\dagger$ respectively.
Formally, the matrix above is defined as
\[
T_{i,j} \df \begin{cases}
\tilde{S}_{(i+1)/2,j} & i \text{ odd}, \\
\tilde{B}_{i/2,j} & i \text{ even},
\end{cases}
\]
and has dimension $(2t + 3) \times (2t + 2)$.
It remains to verify that $T$ is tri-diagonal.
Depending on $i \mod 2$ we obtain by Conditions~\eqref{eq:tilde-S} and~\eqref{eq:tilde-B}
\begin{align*}
T_{i,j} &= \tilde{S}_{(i+1)/2,j} = 0 & & \text{if } i \notin \{j,j+1,j+2\}, \\
T_{i,j} &= \tilde{B}_{i/2,j} = 0 & & \text{if } i \notin \{j, j + 2\}. \\
\end{align*}
This way, we obtain a tri-diagonal matrix $T_k \in \Z^{(2t + 3) \times (2t+2)}$ for every block ${c^{(k)} \choose B}$, and are left with the system
\begin{equation}
\begin{pmatrix}
-e_1 & T_1 \\
-e_1 & & T_2 \\
\svdots & & & \sddots \\
-e_1 & & & & T_n
\end{pmatrix}
x =
\begin{pmatrix}
e_{\tilde{t}} \\ e_{\tilde{t}} \\ \svdots \\ e_{\tilde{t}}
\end{pmatrix}
\end{equation}
that is equivalent to~\eqref{ILP}.

Choose parameters $1 \leq s \leq t + 2$ and $\tau-1 \geq 1$ s.t.\ $2(s+1)^{\tau-1} \geq 2t + 4 > 2(s+1)^{\tau-2}$, implying $2(s+1)^{o(\tau-1)} \in o(t) = o (\log(N))$.
After possibly adding some zero rows and columns we apply Lemma~\ref{lem:diagonal}, and regard each $T_k$ as a multi-stage matrix with $\tau$ stages and stage sizes $2s \cdot \mathbf{1}$;
in the worst case, i.e., choosing $\tau = 2$, the number of rows and columns are squared.
Using the global variable as a first stage,
we obtain a multi-stage integer program with $\tau$ stages and stage sizes $\sigma \df (1,2s,\dots,2s)$ that is equivalent to~\eqref{ILP}.
Furthermore, all entries are still bounded by $2$ and the dimensions are at most quadratic in the dimensions of~\eqref{ILP}.

Hence, if there is an algorithm solving every multi-stage IP in time
\[
2^{2^{ \left( \prod_{i=1}^\tau (\sigma_i + 1) \right)^{o(1)}}} \lvert I \rvert^{O(1)} \leq 2^{2^{o ( \log (N))}} \leq 2^{o(N)} \lvert I \rvert^{O(1)},
\]
we could solve \textsc{3-Sat} in time $2^{o(N)} \lvert I \rvert^{O(1)}$, contradicting the ETH~\cite{jansen2021double}.\qed
\end{proof}

\section{Tree-Fold Integer Programming}
\label{sec:TreeFold}
In this section, we consider tree-fold IPs, whose constraint matrices are the transpose of multi-stage matrices.
Our results can be viewed as a refinement of~\cite[Theorem 4]{knop2020tight}; while~\cite{knop2020tight} only considers a single parameter (namely the \emph{tree-depth}, discussed in Section~$5$), we take more aspects of the structure of the matrix into account.
For example, as one case of our lower bound, we obtain the currently best known lower bound for the special class of $n$-fold IPs, i.e., tree-fold IPs with only two levels.

We reduce from the \textsc{Subset Sum} problem.
There, we are given numbers $a_1,\dots,a_n,b \in \Z_{\geq 0}$, and the task is to decide whether there exists a vector $x \in \{0,1\}^n$ satisfying $\sum_{i=1}^n a_i x_i = b$.
Since all integers are non-negative, we can compare each $a_i$ to $b$ beforehand, and henceforth assume $0 \leq a_i < b$ for all $i$.
\begin{lemma}[{\cite[Lemma 12]{knop2020tight}}]
\label{lem:subset-sum-lb}
Unless the ETH fails, there is no algorithm for \textsc{Subset Sum} that solves every instance $a_1,\dots,a_n,b$ in time
$2^{o( n + \log(b) )}$.
\end{lemma}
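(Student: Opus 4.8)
The plan is to reduce \textsc{3-Sat} to \textsc{Subset Sum} by the classical digit-encoding construction and then to check that this reduction is parameter-efficient. First I would combine the ETH with the Sparsification Lemma to obtain that there is no $2^{o(N)}$-time algorithm deciding satisfiability of a \textsc{3-Sat} formula $\varphi$ with $N$ variables and $m = O(N)$ clauses. It then suffices to transform such a $\varphi$, in polynomial time, into a \textsc{Subset Sum} instance $a_1,\dots,a_n,b$ with $n = O(N)$ and $\log b = O(N)$: any algorithm solving every \textsc{Subset Sum} instance in time $2^{o(n + \log b)}$ would then decide $\varphi$ in time $2^{o(N)}$, contradicting the ETH.

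For the reduction I would use $N + m$ digit positions in a fixed constant base $\beta$ (say $\beta = 10$), one position per variable and one per clause. For each variable $x_i$ I introduce two numbers, one encoding $x_i = 1$ and one encoding $x_i = 0$, each carrying a $1$ in position $i$ and a $1$ in every clause position whose clause contains the corresponding literal. For each clause $C_j$ I add two slack numbers, each carrying a single $1$ in position $j$. The target $b$ has a $1$ in every variable position and a $3$ in every clause position. A subset summing to $b$ is forced to pick exactly one number per variable (the variable digits force a $1$, and no carries occur because every column sums to at most $3 + 2 = 5 < \beta$), which defines a truth assignment; the clause digit $3$ is attainable exactly when every clause has at least one satisfied literal, the slacks absorbing the difference between the number of satisfied literals (between $1$ and $3$) and $3$. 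Conversely, a satisfying assignment yields such a subset, and since each number is used at most once this matches the $\{0,1\}$ formulation of \textsc{Subset Sum}.

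Finally I would read off the parameters: the instance has $n = 2N + 2m = O(N)$ numbers, and $b < \beta^{N + m}$, so $\log b = O(N + m) = O(N)$, which closes the argument. The construction itself is entirely standard, so the only point that needs care — the ``hard part,'' such as it is — is ensuring that $\log b$ stays \emph{linear} (not, say, quasilinear) in $N$. This relies on two ingredients: the sparsification step, guaranteeing $m = O(N)$; and the use of a \emph{constant} base $\beta$ together with the no-carry bound above, so that each of the $N + m$ digits contributes only $O(1)$ bits to $b$.
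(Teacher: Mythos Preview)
The paper does not prove this lemma at all; it imports it verbatim from~\cite{knop2020tight} as a black box and immediately uses it in the proof of Theorem~\ref{thm:tree-fold}. So there is no ``paper's own proof'' to compare against here.

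That said, your proposal is correct and is precisely the standard argument one would expect behind such a citation: ETH plus the Sparsification Lemma gives a \textsc{3-Sat} instance with $m = O(N)$ clauses that cannot be solved in $2^{o(N)}$ time, and Karp's digit-encoding reduction produces a \textsc{Subset Sum} instance with $n = 2N + 2m = O(N)$ numbers and target $b < \beta^{N+m}$, hence $\log b = O(N)$. Your attention to the two delicate points---using sparsification so that $m$ is linear in $N$, and using a constant base with the no-carry bound so that each of the $N+m$ digits contributes $O(1)$ bits to $\log b$---is exactly right; without either of these the parameter bound on $\log b$ could degrade and the conclusion would fail.
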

\begin{theorem}[A lower bound for tree-fold IPs]
\label{thm:tree-fold}
Assuming the ETH, for every fixed $\tau \geq 1$ and $\sigma \in \Z^\tau_{\geq 1}$, there is no algorithm solving every tree-fold IP with $\tau$ levels and level sizes $\sigma$ in time
$2^{o(\prod_{i=1}^\tau (\sigma_i + 1))} |I|^{O(1)}$.
\end{theorem}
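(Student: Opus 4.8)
The plan is to mirror the structure of the proof of Theorem~\ref{thm:multi-stage}, but starting from \textsc{Subset Sum} (Lemma~\ref{lem:subset-sum-lb}) instead of \textsc{3-Sat}, and using Corollary~\ref{cor:diagonal} (the tree-fold analogue of Lemma~\ref{lem:diagonal}) in place of Lemma~\ref{lem:diagonal}. Concretely, given an instance $a_1,\dots,a_n,b$ of \textsc{Subset Sum} with $0 \le a_i < b$, I would first build a single linear constraint $\sum_{i=1}^n a_i x_i = b$ with $x \in \{0,1\}^n$, and then rewrite it so that the constraint matrix becomes bi-diagonal (up to a transpose). The standard trick is to introduce running-sum variables $p_0,\dots,p_n$ with $p_0 = 0$, $p_i - p_{i-1} = a_i x_i$ for $i=1,\dots,n$, and $p_n = b$; alternating the $p_i$'s and $x_i$'s, each constraint involves only two consecutive $p$-variables and one $x$-variable, so the constraint matrix is tri-diagonal (and, after inserting the right zero rows/columns as in the closing remark of Section~\ref{sec:BiTri}, its transpose is tri-diagonal too). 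One still has to encode the bound $0 \le a_i$ into the coefficients: since the $a_i$ and $b$ appear as explicit numbers in the matrix entries, this is allowed and contributes only $\log b$ to the bit-length, which is exactly the term appearing in Lemma~\ref{lem:subset-sum-lb}.

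Next I would invoke Corollary~\ref{cor:diagonal}: choosing parameters $s \ge 1$ and $\tau$ so that $2(s+1)^{\tau} \gtrsim$ (number of columns of the tri-diagonal matrix) $\in O(n + \log b)$, the tri-diagonal constraint matrix can be reordered (by row permutations) into a tree-fold matrix with $\tau$ levels and level sizes $2s\cdot\mathbf 1$. Thus $\prod_{i=1}^\tau(\sigma_i+1) = (2s+1)^\tau$, and by the choice of parameters $\prod_{i=1}^\tau(\sigma_i+1) \in O(n + \log b)$, i.e.\ $(n + \log b) \in \Omega(\prod_{i=1}^\tau(\sigma_i+1))$ — in fact the two are polynomially related, so $\prod_{i=1}^\tau(\sigma_i+1)^{\Omega(1)} \le n + \log b \le \prod_{i=1}^\tau(\sigma_i+1)^{O(1)}$. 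Hence an algorithm running in time $2^{o(\prod_{i=1}^\tau(\sigma_i+1))}|I|^{O(1)}$ would solve \textsc{Subset Sum} in time $2^{o(n + \log b)}$, contradicting the ETH via Lemma~\ref{lem:subset-sum-lb}. As in Theorem~\ref{thm:multi-stage}, I would note that for $\tau=2$ the dimension blows up quadratically but this is harmless, and that for general fixed $\tau$ the same trade-off gives the stated bound; the $n$-fold case $\tau=2$ yields Corollary~\ref{cor:n-fold} with its $2^{\sigma_1\sigma_2}$-type bound.

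The main obstacle I anticipate is bookkeeping the interplay between the two parameters $n$ and $\log b$ and the single composite quantity $\prod_{i=1}^\tau(\sigma_i+1)$, so that the reduction is tight in the sense claimed: the tri-diagonal matrix has roughly $n + \log b$ columns, and one must split these into $\tau$ near-equal "digits" of size $\Theta(s)$ so that $(s+1)^\tau = \Theta(n + \log b)$, while keeping $\tau$ and $\sigma$ fixed — which forces $s$ (hence $n + \log b$) to grow. One has to be careful that "fixed $\tau$, fixed $\sigma$" is compatible with the asymptotics: since the theorem quantifies over all fixed $\tau,\sigma$ and the hard family is allowed to depend on them, this is fine, exactly paralleling the corollary after Theorem~\ref{thm:multi-stage}. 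A secondary technical point is verifying, as in the multi-stage proof, that the alternating row/column permutation genuinely produces a tri-diagonal matrix (checking the two parity cases of the index conditions) and that adding zero rows/columns to meet the exact dimension requirements of Corollary~\ref{cor:diagonal} preserves both the tri-diagonal structure and equivalence of the IP; both are routine but need to be spelled out.
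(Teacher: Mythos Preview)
Your route is workable but genuinely different from the paper's, and you should be aware of what you trade away. The paper does \emph{not} put the $a_i$ into the matrix as large entries and then make the whole system tri-diagonal; instead it uses a \emph{double encoding}. First each $a_i$ is written in base~$\Delta$ with $\Delta^{\sigma_1}\approx b$, yielding a system $(C\mid D)\binom{x}{y}=\binom{b}{0}$ with $\sigma_1$ rows; then each base-$\Delta$ digit is expanded in binary, producing an $n$-fold whose top block has $\sigma_1$ rows and whose diagonal blocks are the bi-diagonal encoding matrices~$E_t$ with $t\approx\log_2\Delta$, and all entries are bounded by~$2$. In this construction $S=\prod_i(\sigma_i+1)\approx\sigma_1 t\approx\log b$, while $n$ is absorbed by the \emph{number of blocks}, not by the level sizes; Corollary~\ref{cor:diagonal} is then applied only to each $E_t$ for $\tau\ge 3$, and the $n$-fold case $\tau=2$ with arbitrary trade-off between $\sigma_1$ and $\sigma_2$ (Corollary~\ref{cor:n-fold}) drops out directly. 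Your running-sum construction instead yields a single staircase system with $O(n)$ variables and entries up to~$b$, and you apply Corollary~\ref{cor:diagonal} to the whole matrix, so $S\approx n$. That still proves the theorem as stated (on the ETH-hard \textsc{Subset Sum} instances one has $n\approx\log b$, so $2^{o(n)}=2^{o(n+\log b)}$), but it loses the constant-entry property the paper relies on in Sections~\ref{sec:Graver} and~\ref{sec:tree-depth} and for matching the lower bound against algorithms whose complexity depends on~$\|A\|_\infty$.

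Two places in your sketch need fixing. First, the tri-diagonal system has $O(n)$ columns, not ``$n+\log b$''; the $\log b$ enters only through the bit-length of the entries, so your bookkeeping paragraph conflates the two. Second, with $\approx n$ constraints and $\approx 2n$ variables your raw matrix is \emph{not} tri-diagonal in the paper's sense: row~$k$ is supported on columns $2k{-}3,2k{-}2,2k{-}1$, which is a step-two staircase, not a step-one band. You must interleave zero constraint rows first to make it genuinely tri-diagonal, and only then invoke the closing remark of Section~\ref{sec:BiTri} to pass to the transpose needed for Corollary~\ref{cor:diagonal}.
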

As a corollary, we obtain the following lower bound for $n$-fold IPs.

\begin{corollary}[A lower bound for $n$-fold IPs]
\label{cor:n-fold}
Assuming the ETH, there is no algorithm solving every $n$-fold IP with level sizes $(\sigma_1,\sigma_2)$ in time $2^{o(\sigma_1 \sigma_2)} |I|^{O(1)}$.
\end{corollary}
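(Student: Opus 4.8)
The plan is to obtain Corollary~\ref{cor:n-fold} as a direct specialization of Theorem~\ref{thm:tree-fold}. By definition an $n$-fold IP is exactly a tree-fold IP with $\tau = 2$ levels, so I would apply Theorem~\ref{thm:tree-fold} with $\tau = 2$ and level sizes $\sigma = (\sigma_1, \sigma_2) \in \Z_{\geq 1}^2$. This immediately yields that, assuming the ETH, there is no algorithm solving every such $n$-fold instance in time $2^{o((\sigma_1+1)(\sigma_2+1))}\,|I|^{O(1)}$.

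The remaining step is purely a matter of rewriting the exponent. Since $\sigma_1,\sigma_2 \geq 1$ we have $\sigma_i + 1 \leq 2\sigma_i$, hence $\sigma_1\sigma_2 \leq (\sigma_1+1)(\sigma_2+1) \leq 4\,\sigma_1\sigma_2$; thus $(\sigma_1+1)(\sigma_2+1) = \Theta(\sigma_1\sigma_2)$, and in particular any function that is $o(\sigma_1\sigma_2)$ is also $o((\sigma_1+1)(\sigma_2+1))$. Consequently a hypothetical algorithm running in $2^{o(\sigma_1\sigma_2)}\,|I|^{O(1)}$ would in particular run in $2^{o((\sigma_1+1)(\sigma_2+1))}\,|I|^{O(1)}$, contradicting Theorem~\ref{thm:tree-fold}. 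This proves the corollary.

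The only point that needs care — and the closest thing to an obstacle — is to confirm that the hard family underlying Theorem~\ref{thm:tree-fold} does not degenerate when one fixes $\tau = 2$: one must check that the construction (which, via Corollary~\ref{cor:diagonal}, turns a bi-diagonal matrix with $S = (\sigma_1+1)(\sigma_2+1)$ rows into an $n$-fold matrix with level sizes $(\sigma_1,\sigma_2)$) still produces instances whose parameter $S$, and hence $\sigma_1\sigma_2$, grows with the size of the underlying \textsc{Subset Sum} instance, so that the ETH-based bound of Lemma~\ref{lem:subset-sum-lb} is genuinely invoked. This is routine bookkeeping inherited from the proof of Theorem~\ref{thm:tree-fold} rather than a new difficulty.
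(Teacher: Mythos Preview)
Your proposal is correct and matches the paper's approach: the paper proves Corollary~\ref{cor:n-fold} precisely as the $\tau=2$ case inside the proof of Theorem~\ref{thm:tree-fold}, and your reduction of the exponent via $(\sigma_1+1)(\sigma_2+1)=\Theta(\sigma_1\sigma_2)$ is exactly the implicit step there. One minor inaccuracy in your closing remark: for $\tau=2$ the paper's construction yields the $n$-fold structure directly (the encoding matrices $E_t$ already sit as diagonal blocks) and does \emph{not} invoke Corollary~\ref{cor:diagonal}, which is only used to further split those blocks when $\tau\geq 3$; this does not affect your argument.
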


\paragraph*{Proof idea (of Theorem~\ref{thm:tree-fold}.)}
    We start with an instance of the \textsc{Subset Sum} problem. The goal is to first model this problem as an appropriate $n$-fold IP. Then, observing that the second level block-diagonal matrices are bi-diagonal allows us to apply Lemma~\ref{lem:diagonal}, yielding the desired algorithm.

    Let us thus focus on sketching how to obtain the $n$-fold matrix. We start with the straight-forward interpretation of  \textsc{Subset Sum} as an integer program, that is, $\{\sum_{i=1}^n a_i x_i = b, x \in \{0,1\}^n\}$. We could directly apply the standard encoding as in Theorem~\ref{thm:multi-stage} to lower the size of the entries, but this would yield $\sigma_1 = 1$. We could assign random rows to the first level to obtain larger values for $\sigma_1$, but this is arguably not a clean reduction for arbitrary values of $\sigma_1$. Instead, we use a doubly encoding of the large entries: First, we encode them to the base of $\Delta$ with 
     $\Delta^{\sigma_1} \geq b > \Delta^{\sigma_1-1}$. This is done similar as in the standard trick, see e.g. Theorem~\ref{thm:multi-stage}, however, we use the transposed construction. The same arguments apply. Then, the standard encoding is used to obtain the desired small entries, and the statement immediately follows. 

\begin{proof}[of Theorem~\ref{thm:tree-fold}]
For $\tau = 1$, we have arbitrary IPs, and the statement follows by~\cite{knop2020tight}.

Let $a_1,\dots,a_n, b \in \Z_{\geq 1}$ be a \textsc{Subset Sum} instance.
Choose a constant $1 \leq \sigma_1 \leq \lceil \log_2(b) \rceil$ and let $\Delta \in \Z_{\geq 2}$ s.t.\ $\Delta^{\sigma_1} \geq b > \Delta^{\sigma_1-1}$, with $\Delta = b$ if $\sigma_1=1$.
In both cases, we have
\begin{align}
\label{eq:delta-r}
\sigma_1 \log_2(\Delta) &\leq 2 \log_2 (b) \leq 2 \sigma_1 \log_2(\Delta).
\end{align}
Let $\lambda = (1,\Delta,\Delta^2,\dots,\Delta^{\sigma_1-1})^\intercal$ and let $c_i \in \{0,\dots,\Delta-1\}^{\sigma_1}$ be the $\Delta$-encoding of $a_i$, i.e., $a_i = \lambda^\intercal c_i$.
Observe that $a^\intercal x = b$ has a solution if and only if the system
\[
(C|D) {x \choose y} \df
\left(
\begin{array}{cccc|cccc}
 & & & & \Delta \\
\svdots & \svdots & & \svdots & -1 & \Delta \\
c_1 & c_2 & \sdots & c_n & & -1 & \sddots \\
\svdots & \svdots & & \svdots & & & \sddots & \Delta \\
 & & & & & & & -1
\end{array}
\right) {x \choose y}
=
\begin{pmatrix}
b \\ 0 \\ 0 \\ \svdots \\ 0
\end{pmatrix}
\]
has a solution ${x \choose y}$.
(The ``if''-direction is observing $\lambda^\intercal (C,D) = (a^\intercal,0)$;
the ``only-if''-direction follows by observing that a solution $x$ fixes the values of $y$.)

Let $t \df \lceil \log_2 \Delta \rceil$ and fix a column $c_k$ of $C$.
We can express each entry of $c_k$ in its binary representation.
This is, let $z^\intercal = (1,2,4,\dots,2^{t-1})^\intercal \in \Z^{t}$ and let $C_k \in \{0,1\}^{\sigma_1 \times t}$ be the unique matrix subject to $C_k z = c_k$.
Similarly, for the $k$-th column $d_k$, there is a unique matrix $D_k \in \{0,1\}^{\sigma_1 \times t}$ subject to $d_k = D_k z$.

Again, let the encoding matrix be the matrix
\begin{align}
\label{eq:Et-in-tree-fold}
E_t &\df
\begin{pmatrix}
2   & -1 \\
 & 2 & -1 \\
 & & \sddots & \sddots \\
 & & & 2 & -1
\end{pmatrix} \in \Z^{(t-1) \times t},
\end{align}
and observe that the system $E_t x = 0$, $0 \leq x \leq 2^t$ only has the two solutions~$\{0,z\}$.
Therefore, the \textsc{Subset Sum} instance has a solution if and only if the system
\begin{align}
\label{eq:n-fold}
\tag{$n$-fold IP}
\mathcal{A} {x \choose y} \df \begin{pmatrix}
C_1 & \sdots & C_n & D_1 & \sdots & D_{s-1} \\
E_t \\
 & E_t \\
 & & & \sddots \\
 & & & & & E_t
\end{pmatrix}
{x \choose y} =
\begin{pmatrix}
b \\ 0 \\ \svdots \\ 0
\end{pmatrix}
\end{align}
has a solution satisfying $0 \leq x \leq 2^t$.

So far, the constructed IP is an $n$-fold IP
with $\sigma_1 \in \Theta (\tfrac{\log_2(b)}{\log_2(\Delta)})$ rows in the top blocks, $t \in \Theta(\log (\Delta))$ columns per block, and $\hat{\sigma}_2 = t-1$ rows in the diagonal blocks.
We have $n+\sigma_1$ blocks in total, and $\|\A\|_{\infty} \leq 2$, hence the size of the constructed instance is polynomial in the size of the \textsc{Subset Sum} instance.
The first claim follows already:
If we could solve it in time
\[
2^{o(\sigma_1 \hat{\sigma}_2)} \leq 2^{o(\sigma_1 \log_2(\Delta))} \stackrel{\eqref{eq:delta-r}}{\leq} 2^{ o ( \log_2 (b) ) },
\]
this would contradict Lemma~\ref{lem:subset-sum-lb}, finishing the proof for $\tau = 2$ (Corollary~\ref{cor:n-fold}).

To prove the statement for $\tau \geq 3$, we continue the construction.
Choose $\tau$ s.t.\ $2 \leq \tau-1 \leq \lceil \log_2(t) \rceil$ and $s \geq 1$ s.t.\ $(s+1)^{\tau-1} \geq t > s^{\tau-1}$.
Furthermore, let $\ell \geq 0$ be s.t.\ $(s+1)^{\tau-1-\ell}s^{\ell} \geq t > (s+1)^{\tau-\ell-2}s^{\ell+1}$.
Observe that if $s=1$, then $\ell = 0$, since $\tau -1 \leq \lceil \log_2(t) \rceil$.
Let $(\sigma_2,\dots,\sigma_{\tau}) \in \{s-1,s\}^{\tau-1}$ have $\tau - 1 - \ell$ entries $s$ and $\ell$ entries $s-1$.
This careful construction of $\sigma$ allows us to estimate
\begin{align}
\label{eq:sigma-tau}
\prod_{i=2}^{\tau} (\sigma_i + 1) = (s+1)^{\tau-1-\ell}s^{\ell} &= \frac{s+1}{s} (s+1)^{\tau-\ell-2}s^{\ell+1} < 2t.
\end{align}

Since $E_t^\intercal$ is bi-diagonal, we can extend $E_t$ with zeros to a tree-fold matrix with $\tau-1$ levels and level sizes $(\sigma_2,\dots,\sigma_{\tau})$ due to Corollary~\ref{cor:diagonal}.
In total, the matrix $\A$ is a tree-fold matrix with $\tau$ levels and level sizes~$\sigma$, and its encoding length did not change.
If there is an algorithm solving every such tree-fold IP in time

\[
2^{o( (\sigma_1+1) \prod_{i=2}^{\tau} (\sigma_i+1) )}
\stackrel{\eqref{eq:sigma-tau}}{\leq} 2^{o( 2\sigma_1 \cdot 2t )}
\leq 2^{o( 8 \sigma_1 \log_2(\Delta) )}
\stackrel{\eqref{eq:delta-r}}{\leq} 2^{o( 16 \log_2(b))},
\]
this would again contradict Lemma~\ref{lem:subset-sum-lb}. \qed
\end{proof}

\section{Lower Bounds On The Graver Basis}
\label{sec:Graver}

\begin{definition}
Let $A \in \Z^{m \times n}$.
The Graver basis $\mathcal{G}(A)$ of $A$ is the set of all vectors $z \in \ker(A) \cap \Z^n\setminus{(0, \dots, 0)}$ that cannot be written as $z = x + y$ with $x,y \in \ker(A) \cap \Z^n$ satisfying $x_iy_i \geq 0$ for all $i$.
For any norm $K$, we define $g_{K} (A) = \max_{g \in \mathcal{G}(A)} \| g \|_{K}$.
\end{definition}

Our previous results state that, assuming ETH, there is no algorithm for multi-stage or tree-fold IPs that solves \emph{every}
instance within a certain time threshold.
We now show further evidence orthogonal to the ETH.
All known algorithms for block-structured IPs have complexities which are at least $g_\infty(A)$ or $g_1(A)$ for multi-stage IPs or tree-fold IPs, respectively.
Thus, if there is no fundamentally different algorithm for those problems, lower bounding $g_\infty(A)$ and $g_1(A)$ lower bounds the complexity of those problems.
We show that the instances we constructed, in particular, the encoding matrix $E_t$ (see~\eqref{ILP} and~\eqref{eq:n-fold}), indeed have large $g_\infty(A)$ and $g_1(A)$, respectively.

\begin{lemma}
\label{lem:graver}
Let $t \geq 2$, $\Delta \in \Z_{\geq 2}$.
The encoding matrix
\[
E_t(\Delta) \df
\begin{pmatrix}
\Delta   & -1   \\
 & \Delta & -1  \\
 & & \sddots  & \sddots \\
 & & & \Delta & -1
\end{pmatrix} \in \Z^{(t-1) \times t}
\]
satisfies $g_{\infty}(E_t(\Delta)) \geq \Delta^{t-1}$ and $g_1(E_t(\Delta)) \geq \tfrac{\Delta^t - 1}{\Delta - 1}$.
\end{lemma}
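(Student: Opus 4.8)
The plan is to exhibit an explicit element of $\ker(E_t(\Delta))\cap\Z^t$ that is itself Graver-minimal (being sign-compatible-indecomposable) and then read off its $\ell_\infty$- and $\ell_1$-norms. The obvious candidate is the vector $v = (\Delta^{t-1},\Delta^{t-2},\dots,\Delta,1)^\intercal$. First I would check $v\in\ker(E_t(\Delta))$: row $i$ of $E_t(\Delta)$ gives $\Delta\cdot v_i - v_{i+1} = \Delta\cdot\Delta^{t-i} - \Delta^{t-i-1}$, and with the indexing $v_i=\Delta^{t-i}$ this is $\Delta^{t-i+1}-\Delta^{t-i-1}$; so the indexing needs to be chosen so consecutive entries differ by exactly a factor $\Delta$, i.e. $v_i = \Delta^{t-i}$ makes row $i$ read $\Delta v_i - v_{i+1}$. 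I will fix the precise offset so that $\Delta v_i = v_{i+1}$ for $i=1,\dots,t-1$, which forces $v=(1,\Delta,\Delta^2,\dots,\Delta^{t-1})^\intercal$ up to reversal; either way $\|v\|_\infty=\Delta^{t-1}$ and $\|v\|_1 = 1+\Delta+\dots+\Delta^{t-1} = \tfrac{\Delta^t-1}{\Delta-1}$, which are exactly the claimed bounds. So the whole statement reduces to showing this single vector lies in $\mathcal{G}(E_t(\Delta))$.

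The key structural fact is that $E_t(\Delta)$ has rank $t-1$, so $\ker(E_t(\Delta))$ is one-dimensional, spanned by $v$ (one checks $v\neq 0$ and $E_t(\Delta)v=0$ as above). Hence every integer kernel vector is an integer multiple $kv$ of the primitive vector $v$ (note $\gcd$ of the entries of $v$ is $1$, so $v$ is primitive). Now suppose $v = x+y$ with $x,y\in\ker(E_t(\Delta))\cap\Z^t$ and $x_iy_i\ge 0$ for all $i$; then $x=av$, $y=bv$ for integers $a,b$ with $a+b=1$, and sign-compatibility at any coordinate forces $av_i$ and $bv_i$ to have the same sign — since all $v_i>0$, this means $a$ and $b$ are both $\ge 0$ (or both $\le 0$), and $a+b=1$ forces $\{a,b\}=\{0,1\}$, i.e. one summand is $0$. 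Therefore $v$ admits no nontrivial sign-compatible decomposition, so $v\in\mathcal{G}(E_t(\Delta))$, and $g_\infty(E_t(\Delta))\ge\|v\|_\infty=\Delta^{t-1}$ and $g_1(E_t(\Delta))\ge\|v\|_1=\tfrac{\Delta^t-1}{\Delta-1}$.

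The main thing to get right — and the only place a subtlety hides — is the rank argument: I must argue that the $(t-1)\times t$ matrix $E_t(\Delta)$ really has full row rank $t-1$ (clear, since deleting the last column leaves an upper-triangular matrix with diagonal entries $\Delta\neq 0$), so that the kernel is exactly a line, and that $v$ is primitive so that the integer points on this line are precisely $\Z v$. Everything else is the short sign-compatibility computation above. I would present it compactly: define $v$, verify $E_t(\Delta)v=0$ by inspecting a generic row, note $\dim\ker = 1$ and $v$ primitive, then run the two-line decomposition argument, and finally evaluate the two norms via the geometric series $\sum_{j=0}^{t-1}\Delta^j = \tfrac{\Delta^t-1}{\Delta-1}$.
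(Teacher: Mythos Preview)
Your proposal is correct and follows essentially the same approach as the paper: observe that $E_t(\Delta)$ has full row rank, so its kernel is one-dimensional and spanned by the primitive vector $z=(1,\Delta,\dots,\Delta^{t-1})^\intercal$, whence $\mathcal{G}(E_t(\Delta))=\{z,-z\}$ and the norm bounds follow. You spell out the primitivity and the sign-compatibility decomposition a bit more explicitly than the paper does, but the argument is the same.
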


\begin{proof}
The matrix $E_t(\Delta)$ has full row rank.
Therefore, its kernel has rank $t-(t-1) = 1$.
Since $z^\intercal \df (1,\Delta,\dots,\Delta^{t-1}) \in \ker(E_t)$, every element in $\ker(E_t) \cap \Z^{t}$ is an integer multiple of $z$.
Thus, $\{z, -z\}$ is the Graver basis of $E_t(\Delta)$. \qed
\end{proof}
\cite[Lemma 2]{eisenbrand2019proximity} states that $g_1(A) \leq (2 m \| A \|_\infty + 1)^m$ for any matrix $A \in \Z^{m \times n}$, so $E_t$ is almost the worst case.
As an immediate consequence of Lemma~\ref{lem:diagonal}, we get:
\begin{theorem}
Let $\tau \in \Z_{\geq 1}$, $\sigma \in \Z^{\tau}_{\geq 1}$, and define $S \df \prod_{i=1}^\tau (\sigma_i + 1)$.
\begin{enumerate}
\item There is a multi-stage matrix $A$ with $\tau$ stages and stage sizes $\sigma$ that satisfies
$g_{\infty}(A) \geq \| A \|_{\infty}^{S - 2}$.
\item There is a tree-fold matrix $A$ with $\tau$ levels and level sizes $\sigma$  that satisfies
$g_{1}(A) \geq \| A \|_{\infty}^{S - 1}$.
\end{enumerate}
\end{theorem}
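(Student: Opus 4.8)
The plan is to combine Lemma~\ref{lem:graver} with the reordering results of Lemma~\ref{lem:diagonal} and Corollary~\ref{cor:diagonal}. The key observation is that the Graver basis is invariant under row and column permutations (a column permutation only permutes the coordinates of every kernel element, a row permutation does not change the kernel at all), so if a matrix $A$ equals $E_t(\Delta)$ — or more precisely a padded version of it — up to such permutations, then $g_\infty(A) = g_\infty(E_t(\Delta))$ and $g_1(A) = g_1(E_t(\Delta))$. Adding zero rows or zero columns is slightly more delicate: a zero row leaves the kernel unchanged, hence leaves the Graver basis unchanged; a zero column adds a free coordinate, but the unique (up to sign) kernel generator of $E_t(\Delta)$ extended by a $0$ in that coordinate is still a Graver element, while the new unit vector $e_j$ in that coordinate is also a Graver element of norm $1$, so the maximum norm is unaffected. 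Thus padding preserves both $g_\infty$ and $g_1$.

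For part~1, I would take $t$ and $\Delta$ with $\Delta = \|A\|_\infty$ and arrange that the padded encoding matrix $E_t(\Delta)$ has exactly $S$ columns; concretely, set $\Delta$ to be the desired max entry, and choose $t = S$ so that $E_t(\Delta) \in \Z^{(S-1)\times S}$. First I would note that since $\left(E_t(\Delta)\right)^\intercal$ is bi-diagonal, after inserting one zero column in front we get a matrix $\tilde E$ with $\tilde E \in \Z^{(S-1)\times S}$ whose transpose has $S$ rows and $S-1$ columns and is bi-diagonal, so Lemma~\ref{lem:diagonal}\,$i)$ (for the transpose) — equivalently Corollary~\ref{cor:diagonal} is not what we want here; rather we want a \emph{multi-stage} matrix, so we apply Lemma~\ref{lem:diagonal}\,$i)$ directly to a bi-diagonal matrix with $S$ rows and $S-1$ columns. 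Here the natural choice is to start from $E_S(\Delta)^\intercal$ viewed with the roles of rows and columns swapped appropriately, pad with zeros to reach dimensions $S \times (S-1)$, confirm it is bi-diagonal, and invoke Lemma~\ref{lem:diagonal}\,$i)$ to obtain a multi-stage matrix $A$ with $\tau$ stages and stage sizes $\sigma$. Since $A$ arises from $E_S(\Delta)$ by transposition is \emph{not} allowed (transposition changes multi-stage into tree-fold), so the correct route is: take the bi-diagonal matrix $E_S(\Delta)^\dagger$-type object that is genuinely bi-diagonal as a matrix (not its transpose), pad to $S \times (S-1)$, apply Lemma~\ref{lem:diagonal}\,$i)$. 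Then $g_\infty(A) = g_\infty(E_S(\Delta)) \geq \Delta^{S-1}$ by Lemma~\ref{lem:graver}; but we want the exponent $S-2$, which is weaker, so it certainly follows. (The slack of one in the exponent presumably absorbs the zero-column padding, so stating $\Delta^{S-2}$ is safe.)

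For part~2, the argument is the mirror image: we want a \emph{tree-fold} matrix, whose defining matrix has its \emph{transpose} bi-diagonal, so we apply Corollary~\ref{cor:diagonal}\,$i)$. Take $E_S(\Delta)$ itself — whose transpose $E_S(\Delta)^\intercal$ is bi-diagonal after we pad with a zero column to make it $S \times (S-1)$ — and invoke Corollary~\ref{cor:diagonal}\,$i)$ to get a tree-fold matrix $A$ with $\tau$ levels and level sizes $\sigma$, with $\|A\|_\infty = \Delta$ and $g_1(A) = g_1(E_S(\Delta)) \geq \frac{\Delta^S - 1}{\Delta - 1} \geq \Delta^{S-1}$, the last inequality because $\Delta^S - 1 \geq (\Delta-1)\Delta^{S-1}$. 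This gives exactly the claimed bound $g_1(A) \geq \|A\|_\infty^{S-1}$.

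The main obstacle — really the only place one must be careful — is the bookkeeping around which of Lemma~\ref{lem:diagonal} versus Corollary~\ref{cor:diagonal} to use (multi-stage wants the matrix itself bi-diagonal; tree-fold wants its transpose bi-diagonal), and verifying that the zero-row/zero-column padding needed to hit the exact dimensions $S \times (S-1)$ changes neither $g_\infty$ nor $g_1$. Both points are routine: padding by zeros either leaves the kernel untouched or adds a trivial unit-vector generator of norm $1$, and the Graver basis of $E_S(\Delta)$ is a single pair $\{\pm(1,\Delta,\dots,\Delta^{S-1})\}$ by Lemma~\ref{lem:graver}, so its $\ell_\infty$-norm is $\Delta^{S-1}$ and its $\ell_1$-norm is $\frac{\Delta^S-1}{\Delta-1}$, and neither is beaten by a norm-$1$ vector. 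I expect the write-up to be only a few lines once the right lemma is selected for each case.
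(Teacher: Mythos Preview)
Your approach is the paper's: combine Lemma~\ref{lem:graver} with Lemma~\ref{lem:diagonal} (respectively Corollary~\ref{cor:diagonal}), using that row/column permutations and zero-row padding leave the Graver norms unchanged. For part~2 you have it exactly --- $E_S(\Delta)^\intercal \in \Z^{S\times(S-1)}$ is already bi-diagonal in the paper's sense, so no padding is needed and Corollary~\ref{cor:diagonal}\,$i)$ applies directly, giving $g_1(A)\geq \Delta^{S-1}$.

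For part~1 your bookkeeping wobbles: you cannot get $g_\infty(A)=g_\infty(E_S(\Delta))$ from an $S\times(S-1)$ matrix, since $E_S(\Delta)$ has $S$ columns. The clean choice (and the paper's) is to start from $E_{S-1}(\Delta)\in\Z^{(S-2)\times(S-1)}$, add one zero row on top --- this shifts the super-diagonal down to the sub-diagonal, so the result \emph{is} bi-diagonal in the paper's sense --- and one zero row on the bottom to reach dimensions $S\times(S-1)$; then Lemma~\ref{lem:diagonal}\,$i)$ applies. Zero rows do not change the kernel, hence $g_\infty(A)=g_\infty(E_{S-1}(\Delta))\geq \Delta^{S-2}$, which is precisely the stated bound (and explains the ``slack of one'' you noticed).
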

\begin{proof}
For the first claim, observe that the matrix $E_{S - 1}(\Delta)$ satisfies $g_{\infty}(E_{S - 1}(\Delta)) \geq \Delta^{S-2}$.
If $A$ arises from $E_{S - 1}(\Delta)$ by adding a zero row in the top and the bottom, we immediately obtain $g_{\infty}(A) \geq \Delta^{S-2}$, and can apply Lemma~\ref{lem:diagonal}.

For the second claim, we can apply Corollary~\ref{cor:diagonal} to the matrix $E_{S}(\Delta)$. \qed
\end{proof}

\section{Beyond Block Structure: Tree-Depth}
\label{sec:tree-depth}
There is the more general notion of primal and dual \emph{tree-depth} of $A$ to capture the above classes of block-structured IPs.
This section provides a brief introduction and states our results in terms of these parameters.

The \emph{primal graph} $G$ of $A$ has the columns of $A$ as a vertex set, and an edge between two columns $a,b$ if they share a non-zero entry, i.e., there is an index $i$ with $a_ib_i \neq 0$.
A \emph{td-decomposition} of $G$ is an arborescence $T$ with $V(T) = V(G)$ s.t.\ for any edge $\{u,v\} \in E(G)$ there either is a $(u,v)$-path in $T$ or a $(v,u)$-path.
A td-decomposition of $G$ with minimum height is a minimum td-decomposition.
The primal tree-depth $\td_P(A)$ of $A$ is the maximum number of vertices on any path in a miminum td-decomposition of $G$.
The \emph{dual graph} and tree-depth are the primal graph and tree-depth of $A^\intercal$.

If $A$ is a multi-stage matrix with $\tau$ stages and stage sizes $\sigma$, we can construct a primal td-decomposition:
Start with a path through the first $\sigma_1$ columns.
Since the rest of the matrix decomposes into blocks, there are no edges between any two blocks, and any edge inducing a path either is from one of the first $\sigma_1$ columns to a block, or within a block.
Thus, we can recurse on the blocks and append another path of $\sigma_2$ columns to column $\sigma_1$.
This way, we obtain a td-decomposition of height $\sum_{i=1}^\tau \sigma_i - 1$, and get:
\begin{lemma}
\leavevmode
\begin{enumerate}
\item Let $A$ be multi-stage with $\tau$ stages and stage sizes $\sigma$.
Then $\td_P(A) \leq \sum_{i=1}^\tau \sigma_i$.
\item Let $A$ be tree-fold with $\tau$ levels and level sizes $\sigma$.
Then $\td_D(A) \leq \sum_{i=1}^\tau \sigma_i$.
\end{enumerate}
\end{lemma}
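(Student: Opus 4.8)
The plan is to prove both items simultaneously by exhibiting an explicit td-decomposition of the primal graph (for item~1) and the dual graph (for item~2), then reading off the bound from its height. Since the dual tree-depth of $A$ is by definition the primal tree-depth of $A^\intercal$, and transposing a multi-stage matrix with $\tau$ stages and stage sizes $\sigma$ yields a tree-fold matrix with the same parameters, it suffices to prove item~1; item~2 follows by applying item~1 to $A^\intercal$.

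So I focus on item~1. First I would set up induction on $\tau$. In the base case $\tau=1$, $A$ has only $\sigma_1$ rows, so its primal graph has at most $\sigma_1$ vertices contained in any single path, and indeed $\td_P(A)\le\sigma_1$ trivially (take any path through all columns as the td-decomposition). For the inductive step, recall that deleting the first $\sigma_1$ rows decomposes $A$ into block-diagonal form with blocks $A^{(1)},\dots,A^{(b)}$, each a multi-stage matrix with $\tau-1$ stages and stage sizes $(\sigma_2,\dots,\sigma_\tau)$. The key structural observation is that in the primal graph $G$ of $A$, any edge is either (a) incident to one of the first $\sigma_1$ columns (columns that have a nonzero among the first $\sigma_1$ rows), or (b) an edge internal to a single block $A^{(j)}$ --- there are no edges between distinct blocks, because distinct blocks share no common nonzero row among rows $\sigma_1+1,\dots$, and columns that are nonzero only below row $\sigma_1$ and in different blocks are orthogonal. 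I would then build $T$: a path $P_0$ on the $\sigma_1$ "global" columns, and for each block $A^{(j)}$ obtain by the induction hypothesis a td-decomposition $T_j$ of height $\le\sum_{i=2}^\tau\sigma_i$; attach the root of each $T_j$ as a child of the bottom vertex of $P_0$. One must verify $T$ is a valid td-decomposition: for an edge of type (a), the global endpoint lies on $P_0$ which is an ancestor-path of everything below it, so an ancestor-descendant relation holds; for an edge of type (b), it is covered within $T_j$ by the induction hypothesis. The resulting height is $\sigma_1+\sum_{i=2}^\tau\sigma_i=\sum_{i=1}^\tau\sigma_i$, giving $\td_P(A)\le\sum_{i=1}^\tau\sigma_i$ (the paper's "$-1$" offset is an edge-count versus vertex-count convention; I would state the bound consistently with the lemma as $\le\sum_i\sigma_i$).

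The main obstacle --- really the only nontrivial point --- is the careful justification of the structural claim that there are no primal-graph edges between two distinct diagonal blocks, \emph{and} that every global-to-block or intra-block edge is correctly covered by the ancestor relation in $T$. This is where one must be precise about which columns count as "global": a column may have nonzeros both in the first $\sigma_1$ rows and inside a block, so the partition of columns is into the $\sigma_1$ columns placed on $P_0$ versus the remaining columns partitioned among blocks, and one checks that an edge between a $P_0$-column and a block-column, or between two columns in different blocks, can only arise through a shared nonzero in the first $\sigma_1$ rows, hence at least one endpoint is on $P_0$. Everything else is bookkeeping about heights of attached subtrees.
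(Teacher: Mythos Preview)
Your overall approach---build a path on the global columns, attach recursively-constructed td-decompositions of the diagonal blocks, read off the height---is exactly what the paper does, and it is correct. Item~2 indeed follows from item~1 by transposition.

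However, you have a systematic row/column mix-up that you must clean up before the argument actually goes through. For a \emph{multi-stage} matrix, deleting the first $\sigma_1$ \emph{columns} (not rows) leaves a block-diagonal matrix; this is immediate from the definition, since multi-stage is the transpose of tree-fold and for tree-fold one deletes the first $\sigma_1$ rows. Consequently: in the base case $\tau=1$ the multi-stage matrix has $\sigma_1$ \emph{columns}, so the primal graph has $\sigma_1$ vertices and the bound is trivial; your stated reason (``$\sigma_1$ rows'') does not imply the bound. Likewise, your parenthetical ``columns that have a nonzero among the first $\sigma_1$ rows'' and the discussion in your final paragraph about ``a column may have nonzeros both in the first $\sigma_1$ rows and inside a block'' are artefacts of this confusion. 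The partition of columns is simply: the first $\sigma_1$ columns go on $P_0$, and the remaining columns are \emph{by definition} partitioned among the diagonal blocks with pairwise disjoint row supports. So there is no subtle issue about which columns count as global---it is literally the first $\sigma_1$---and the case analysis for edges (global--global, global--block, same block, different blocks) is then immediate. Once you swap ``rows'' for ``columns'' in the right places, the proof is complete and identical to the paper's.
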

Using the constructions of Theorems~\ref{thm:multi-stage} and~\ref{thm:tree-fold}, we have $\sigma \leq 2 \cdot \mathbf{1}$ and hence, $\td_P(\A) \leq 2 \tau$, $\td_D(\A) \leq 2\tau$ respectively.
We obtain the following corollary.
While the second point was already proven in~\cite{knop2020tight}, the first point is a new consequence.
\begin{corollary}
Assuming the ETH,
there is no algorithm solving every IP in time
$2^{2^{2^{o( \td_P(A))}}} |I|^{O(1)}$,
nor in time
$2^{o (2^{\td_D(A)})} |I|^{O(1)}$.
\end{corollary}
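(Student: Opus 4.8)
The plan is to run each hypothetical algorithm on the hard instances already produced by the proofs of Theorems~\ref{thm:multi-stage} and~\ref{thm:tree-fold}, and to show that on those instances the stated running time collapses to one excluded by those theorems. The only extra ingredient needed is the preceding Lemma ($\td_P(A)\le\|\sigma\|_1$ for multi-stage $A$, $\td_D(A)\le\|\sigma\|_1$ for tree-fold $A$) together with the fact that in both constructions one may take the stage/level sizes bounded by a constant vector. I would first note that, replacing each function hidden in an $o(\cdot)$ by its running maximum, we may assume these functions are monotone non-decreasing, so that upper bounds on the tree-depth can be substituted into them.

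For the primal bound I would argue as follows. Assume an algorithm solves every IP with matrix $A$ in time $2^{2^{2^{o(\td_P(A))}}}|I|^{O(1)}$, and feed it the multi-stage instances from the proof of Theorem~\ref{thm:multi-stage} obtained by taking $s=1$; these have $\tau$ stages, stage sizes $\sigma=(1,2,\dots,2)$, and $S=\prod_{i=1}^\tau(\sigma_i+1)=2\cdot 3^{\tau-1}$, so by the Lemma $\td_P(A)\le\|\sigma\|_1=2\tau-1$. Since $\tau=\Theta(\log S)$, monotonicity gives $2^{o(\td_P(A))}\le 2^{o(2\tau)}=2^{o(\log S)}=S^{o(1)}$, and hence the algorithm would solve these instances in time $2^{2^{S^{o(1)}}}|I|^{O(1)}$, which Theorem~\ref{thm:multi-stage} rules out under the ETH. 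Here the constant factor in $\td_P(A)\le 2\tau$ does no harm: it is swallowed by the outermost $o(\cdot)$, before any exponentiation takes place.

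For the dual bound I would instead use the tree-fold instances from the proof of Theorem~\ref{thm:tree-fold}, but it is crucial to take the branch $s=1$ of that construction (together with $\sigma_1=1$), which forces $\sigma=\mathbf{1}$. Then $\prod_{i=1}^\tau(\sigma_i+1)=2^\tau$ while the Lemma gives $\td_D(A)\le\|\sigma\|_1=\tau$, so $2^{\td_D(A)}\le 2^\tau=\prod_{i=1}^\tau(\sigma_i+1)$. An algorithm running in time $2^{o(2^{\td_D(A)})}|I|^{O(1)}$ would therefore solve these tree-fold IPs in time $2^{o(2^\tau)}|I|^{O(1)}=2^{o(\prod_{i=1}^\tau(\sigma_i+1))}|I|^{O(1)}$, contradicting Theorem~\ref{thm:tree-fold}; unwinding the reduction to \textsc{Subset Sum} instead, one has $2^{\td_D(A)}=O(\log b)$ and $|I|=\poly(n,\log b)$, so the same algorithm would decide \textsc{Subset Sum} in time $2^{o(n+\log b)}$, contradicting Lemma~\ref{lem:subset-sum-lb}.

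I expect the only genuinely delicate point to be this asymmetry in how tightly the tree-depth must be controlled. For the triple-exponential primal bound any bound $\td_P(A)\le c\tau$ with $c=O(1)$ is enough, since the slack lies outside the tower. For the double-exponential dual bound, however, a multiplicative constant in the exponent gets squared by the outer $2^{2^{(\cdot)}}$: the weaker estimate $\td_D(A)\le 2\tau$ would only give running time $2^{o(4^\tau)}$, and Theorem~\ref{thm:tree-fold} with sizes $\mathbf{1}$ excludes $2^{o(2^\tau)}$, not $2^{o(4^\tau)}$. This is precisely why the dual case must invoke the $s=1$ variant, where $\td_D(A)$ agrees with $\tau$ up to an additive (not multiplicative) constant. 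Everything else is routine asymptotic bookkeeping.
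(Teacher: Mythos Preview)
Your proposal is correct and follows the paper's approach exactly: apply the Lemma bounding $\td_P$ and $\td_D$ by $\|\sigma\|_1$ to the hard instances from Theorems~\ref{thm:multi-stage} and~\ref{thm:tree-fold} with constant stage/level sizes. You are in fact more precise than the paper's terse justification, correctly observing that the dual case requires the choice $s=1$ (hence $\sigma=\mathbf{1}$ and $\td_D(A)\le\tau$) rather than the looser $\sigma\le 2\cdot\mathbf{1}$ the paper states, since a multiplicative constant in the exponent would not be absorbed by the single outer $o(\cdot)$.
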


\bibliography{ref}

\end{document}